\documentclass[10pt]{sig-alternate-05-2015}
\usepackage{authblk}

\usepackage[T1]{fontenc}
\usepackage[hidelinks]{hyperref}

\usepackage{amsfonts}
\usepackage{amssymb}

\usepackage{mathtools}
\usepackage{cite}
\usepackage[linesnumbered,algoruled,noend]{algorithm2e}
\makeatletter
\newcommand{\removelatexerror}{\let\@latex@error\@gobble}
\makeatother
\let\oldnl\nl
\newcommand{\nonl}{\renewcommand{\nl}{\let\nl\oldnl}}
\usepackage{paralist}
\usepackage{graphicx}

\usepackage{tikz}                                                               
\usepackage{subcaption}

\usetikzlibrary{calc,shapes,arrows,chains,backgrounds,decorations.pathmorphing}
\usepackage{color}

\definecolor{myred}{HTML}{FC8D59}
\definecolor{myyellow}{HTML}{FFFFBF}
\definecolor{mygreen}{HTML}{99D594}



\usepackage{etoolbox}
\usepackage{comment}

\newtoggle{APPENDIX}
\togglefalse{APPENDIX}
\newtoggle{TR}
\togglefalse{TR}
\toggletrue{TR}
\makeatletter
\def\@copyrightspace{\relax}
\makeatother

\newtheorem{definition}{Definition}
\newtheorem{theorem}{Theorem}
\newtheorem{lemma}{Lemma}
\newtheorem{corollary}{Corollary}

\DeclarePairedDelimiter\abs{\lvert}{\rvert}
\makeatletter
\let\oldabs\abs
\def\abs{\@ifstar{\oldabs}{\oldabs*}}
\makeatother

\newcommand\set[1]{\left\lbrace #1 \right\rbrace}

\newcommand{\equalityref}[1]{\hyperref[#1]{Equality~\eqref{#1}}}
\newcommand{\inequalityref}[1]{\hyperref[#1]{Inequality~\eqref{#1}}}

\newcommand{\MA}{message adversary}

\newcommand{\MAD}{\mathtt{MA}}

\newcommand{\EStable}{\lozenge  \mathtt{STABLE}(D)}

\newcommand{\ra}{\rightarrow}
\newcommand{\G}{\mathcal{G}}

\newcommand{\Gr}{\mathcal{G}^r}

\newcommand{\Seq}[3]{(\G^{#1})_{#1 = #2}^{#3}}
\newcommand{\Seqr}[2]{\Seq{r}{#1}{#2}}

\newcommand{\SeqrI}{(\G^r)_{r \in I}}

\newcommand{\Exec}[1]{\langle #1 \rangle}

\SetKwFunction{roots}{roots}
\newcommand{\nbound}{N}
\newcommand\Star[1]{\mathcal{S}({#1})}
\newcommand\Starwl[1]{\mathcal{S}'({#1})}
\newcommand\SeqG[1]{(#1)_{D+1}^{\infty}}
\newcommand\SeqGr[1]{(#1)_{r+1}^{\infty}}

\DeclareMathOperator{\CPast}{CP}
\DeclareMathOperator{\IN}{In}

\renewcommand{\geq}{\geqslant}
\renewcommand{\leq}{\leqslant}
\renewcommand{\epsilon}{\varepsilon}
\newcommand{\CP}[3]{\CPast_{#1}^{#3}({#2})}

\newcommand\incG{\ensuremath{\overline{\G}}}
\newcommand\incR{\ensuremath{\overline{R}}}
%
%
%
\newcommand{\todo}[1]{}
\renewcommand{\todo}[1]{{\color{red} TODO: {#1}}}
\newcommand\indist[1]{\sim_{#1}}
\newcommand\Alg{\mathcal{A}}
\newcommand\MASty[1]{\mbox{\footnotesize \sf{\textup{#1}}}}
\newcommand{\DBMA}{\MASty{DIAM}(D)}
\newcommand{\BStable}{\lozenge \MASty{STABLE}_D}
\newcommand{\GStable}{\lozenge  \MASty{STABLE}^*}
\newcommand{\BLiveness}{\lozenge \MASty{STABILITY}}
\newcommand{\BSafety}{\MASty{ROOTED}}
\newcommand{\OldSafety}{\MASty{STICKY}}
\renewcommand{\MAD}{\MASty{MA}}
\renewcommand{\EStable}{\BStable}

\newcommand{\Ra}{\Rightarrow}

\newcommand{\postr}{t}
\newcommand{\VSRC}{stable root component}
\newcommand{\VSRCaps}{Stable Root Component}
\newcommand{\Rooted}[1]{$#1$-rooted}

\SetKwData{Proposal}{prop}
\SetKwData{Lock}{prop}
\SetKwData{Locked}{locked}
\SetKwData{Queue}{queue}
\SetKwFunction{Root}{root}
\newcommand\true{\textsc{true}}
\newcommand\false{\textsc{false}}

\newcommand\lastround{\ensuremath{r^*}}
\newcommand\InvA{\ensuremath{\mathcal{I}}}
\newcommand\InvB{\ensuremath{\mathcal{I}'}}
\newcommand\JnvA{\ensuremath{\mathcal{J}}}

\newcommand{\impscale}{0.9}


\newcommand{\namedref}[2]{\hyperref[#2]{#1~\ref*{#2}}}
\newcommand{\sectionref}[1]{\namedref{Section}{#1}}

\newcommand{\theoremref}[1]{\namedref{Theorem}{#1}}
\newcommand{\defref}[1]{\namedref{Definition}{#1}}
\newcommand{\figureref}[1]{\namedref{Figure}{#1}}
\newcommand{\lemmaref}[1]{\namedref{Lemma}{#1}}

\newcommand{\algref}[1]{\namedref{Algorithm}{#1}}
\newcommand{\corollaryref}[1]{\namedref{Corollary}{#1}}

\newcommand{\lineref}[1]{\namedref{Line}{#1}}
\newcommand{\equref}[1]{\hyperref[#1]{Eq.~\eqref{#1}}} 

\newcommand{\SOURCE}{\operatorname{SOURCE}}
\newcommand{\QUORUM}{\operatorname{QUORUM}}

\providecommand{\keywords}[1]{\textbf{\textit{Keywwords: }} #1}

\hyphenation{dis-trib-uted strong-est}

\begin{document}

\title{\iftoggle{TR}{}{Regular Submission: }Consensus in Rooted Dynamic Networks with Short-Lived Stability}


%


\author[1]{Kyrill Winkler}
\author[1]{Manfred Schwarz}
\author[1]{Ulrich Schmid}
\affil[1]{TU Wien, Vienna, Austria\\ \{kwinkler, mschwarz, s\}@ecs.tuwien.ac.at}

\maketitle
\begin{abstract}
	We consider the problem of solving consensus using deterministic algorithms in a
	synchronous dynamic network with unreliable, directional point-to-point
	links, which are under the control of a message adversary.
        In contrast to a large body of existing work that focuses on oblivious message adversaries
        where the communication graphs are picked from a predefined set,
        we consider message adversaries where guarantees about stable periods that occur
        only eventually can be expressed.
	We reveal to what extent such eventual stability is necessary and sufficient, that is,
        we present the shortest period of stability that permits solving consensus,
	a result that should prove quite useful in systems that exhibit
	erratic boot-up phases or recover after repeatedly occurring, massive transient faults.
	Contrary to the case of longer stability periods, where we show how standard algorithmic
	techniques for solving consensus can be employed, the short-lived nature of the stability
        phase forces us to use more unusual algorithmic methods that avoid waiting explicitly for
        the stability period to occur.
\end{abstract}

\keywords{Dynamic networks, consensus, message adversary, eventual
	stability, short stability periods, rooted directed graphs}
%

\iftoggle{TR}{}{\clearpage}
\section{Introduction}
\label{sec:intro}

We consider deterministic consensus algorithms in synchronous dynamic networks,
where a potentially unknown number
$n$ of processes that never fail\footnote{Nevertheless, a crash of
	some process $p$ in some round could easily be modelled by 
$p$ sending no messages in any later round.} 
communicate via unacknowledged messages over
unreliable point-to-point links. 
Consensus, which is a pivotal service in truly distributed applications, is
the problem of computing a common decision value based on local input values
of all the processes.  
An execution of a consensus algorithm in our system proceeds in a sequence of
lock-step synchronous\footnote{It assumes that all processes simultaneously
broadcast a message at the beginning of a round, receive the messages from 
each other, and then simultaneously make a state transition at the end 
of the round, thereby proceeding to the next round.}
rounds, where message loss is modelled using an omniscient message adversary that determines the
directed \emph{communication graph} $\G^r$ for each round $r$. A directed edge $(p,q)$
present in $\G^r$ means that the message sent by $p$ in round $r$ is successfully
received by $q$.

In most existing work in this area, e.g.\ \cite{SW89,SWK09,CGP15,AG13}, the
message adversary is oblivious, i.e., may choose each $\Gr$ from the
\emph{same} set of admissible graphs arbitrarily in each round.
For instance, the classic result from Santoro and Widmayer \cite{SW89} states
that consensus is impossible if the adversary may suppress $n-1$ or more
messages in every round.
More recently, \cite{CGP15} introduced an equivalence relation on the set of admissible communication graphs such that
consensus is solvable if and only if for each equivalence class there is
a common \emph{source} (a node that has a directed path to every 
other node) in every graph. These (and similar) approaches characterize the solvability 
of consensus by means of properties of the \emph{set} of admissible graphs.

We also explore the solvability/impossibility border of
consensus, albeit under \emph{non-oblivious} message adversaries that
support \emph{eventual stabilization}
\cite{BRS12:sirocco,SWSBR15:NETYS,SWS16:ICDCN}: Here, the set of admissible
choices for $\Gr$ may change with evolving round numbers~$r$. Rather than
constraining the set of admissible graphs, we hence constrain admissible
graph \emph{sequences}. As it
turns out, consensus can be solved for graph sequences where
the \emph{set} of graphs occurring in the sequence would render consensus 
impossible under an oblivious message adversary \cite{SW89,CGP15}.

Apart from being theoretically interesting, considering eventually
stabilizing dynamic networks is also useful from a practical 
perspective: Algorithms that work correctly under eventually stabilizing message
adversaries are particularly suitable for systems that suffer from uncoordinated boot-up
sequences or systems that must recover from massive transient faults: Network connectivity can be expected to improve over time here, e.g., due
to improving clock synchronization quality. Since it is usually
difficult to determine the time when such a system has reached
normal operation mode, algorithms that just terminate when
a reasonably stable period has been reached are obviously advantageous.
Algorithms that work correctly under short-lived stable periods are particularly
interesting, since they have higher coverage and terminate earlier in
systems where longer stable periods occur only rarely or even not at 
all. Note that the occurrence of short-lived stability periods could be
confirmed in the case of a prototype wireless sensor network \cite{PS16:SENSORCOMM}.

Last but not least, stabilizing algorithms require less reliable and, in our
case, not inherently bidirectional communication underneath, hence 
work with cheaper and/or more energy-efficient network communication
interfaces.
After all, guaranteeing reliable bidirectional communication links typically incurs
significant costs and/or delays and might even be impossible in adverse
environments. We hence conjecture that our findings may turn out useful for
applications such as mobile ad-hoc networks \cite{KM07} with heavy
interference or disaster-relief applications \cite{LHSP11}.

In view of such applications, our core assumption of a synchronous
system may appear somewhat unreasonable. However, it is not thanks 
to modern communication technology \cite{SCS04}: As synchronized clocks 
are typically required for basic communication 
in wireless systems anyway, e.g., for transmission scheduling and 
sender/receiver synchronization, global synchrony 
is reasonably easy to achieve: It can be integrated
directly at low system levels as in 802.11 MAC+PHY 
\cite{IEEE802Phy}, provided by GPS receivers, 
or implemented by means of network time synchronization
protocols like IEEE~1588 or FTSP \cite{MKSL04}.

\textbf{Main contributions and paper organization:} In this paper, we thoroughly answer
the question of the minimal stability required for solving consensus 
under eventual stabilizing message adversaries. After the introduction
of our system model and our message adversaries in \sectionref{sec:model} and 
\sectionref{sec:adversary}, respectively, we establish the following results:
\begin{enumerate}
\item[(1)] We provide a novel algorithm in \sectionref{sec:algorithm}, along with its correctness proof, which
solves consensus for a message adversary that generates graph sequences
consisting of graphs that (i) are rooted, i.e., have exactly one root component (a strongly connected 
component without any incoming edges from outside of the component),
and (ii) contain a subsequence of $x=D+1$ consecutive graphs 
whose root component is formed by the same set of nodes (``stable root component'').
Herein, the system parameter $D \leq n-1$ is the dynamic diameter, 
i.e., the number of rounds required for any node in a stable root component to reach
all nodes in the system. 
Thanks to (i), our algorithm is always safe
in the sense that agreement is never violated; (ii) is only needed to ensure
termination. Compared to all existing algorithms for non-oblivious 
message adversaries like \cite{BRS12:sirocco,SWSBR15:NETYS,SWS16:ICDCN}, where the
processes more or less wait for the stability window to occur, our algorithm uses 
quite different algorithmic techniques.
\item[(2)] In previous work \cite{BRS12:sirocco,BRSSW16:TR}, it has been shown that
$x=D-1$ is a lower bound for the stability interval for any consensus 
algorithm working under message adversaries that guarantee a stable root component to occur eventually, and that (a bound on) $D$
must be a priori known.\footnote{Whereas this may seem a somewhat unrealistic 
(though inevitable) restriction at first sight, it must be noted that 
$D$ only needs to be guaranteed throughout the stability interval.
And indeed, our wireless sensor network measurements \cite{PS16:SENSORCOMM} 
confirmed that this is not an unrealistic assumption.} 
In \sectionref{sec:impossibility} of this paper, 
we improve the lower bound to $x=D$, 
which reveals that the previous bound
was not tight and that our new algorithm is optimal.
This result also shows that the mere propagation of some input 
value to every process
does not suffice to solve consensus in this setting. 
\item[(3)] To complement earlier results \cite{SWS16:ICDCN}
about consensus
algorithms that work for stability periods longer than $2D+1$, 
we show in \sectionref{sec:completepicture}
that very large
periods of stability, namely, at least $3n-3$, 
also allow to adopt the well-known uniform
voting algorithm \cite{CBS09} for solving consensus in our setting.

\end{enumerate}
Some conclusions and directions of future work in \sectionref{sec:conclusion} complete the paper.

\medskip

As a final remark, we note that our methodology
is in stark contrast to the approach advocated in \cite{RS13:PODC}, 
which shows, among other insightful results, that the message adversary 
$\SOURCE+\QUORUM$ allows to simulate an asynchronous message passing system
with process crashes augmented by the failure detector $(\Sigma,\Omega)$.
Since this is a weakest failure detector for consensus \cite{DFGHKT04},
it is possible to use classic consensus algorithms on top of this
simulation.
Furthermore, as $\Sigma$ is the weakest failure detector to simulate shared
memory on top of wait-free asynchronous message passing \cite{DFGHKT04}, even
shared memory algorithms that rely on $\Omega$ could be employed.

In \cite[Sec.~8]{BRSSW16:TR}, we hence investigated the 
potential of simulating $(\Sigma,\Omega)$ on top of eventually
stabilizing message adversaries, as this would allow us to employ such 
well-established consensus solutions instead of specifically
tailored algorithms.
Unfortunately, it turned out that $\Sigma$ cannot be
implemented here, even in the case of message adversaries 
that eventually guarantee
an infinite period of stability --- not to speak of message adversaries
that guarantee only a finite period of stability like the one presented in
this paper.
Therefore, we had to conclude that, for this type of message adversaries,
failure detector simulations are no viable alternative to the approach 
taken here.

\subsection*{Related work}

Research on consensus in synchronous message passing systems 
subject to link failures dates back at least to the seminal
paper \cite{SW89} by Santoro and Widmayer; generalizations 
have been provided in 
\cite{SWK09,CBS09,BSW11:hyb,CGP15,CFN15:ICALP}. In all these papers,
consensus, resp.\ variants thereof, are solved in systems where,
in each round, a digraph is picked from a set of possible 
communication graphs. The term message adversary was coined 
by Afek and Gafni in \cite{AG13} for this abstraction.

A different approach for modeling dynamic networks has been
proposed in \cite{KLO10:STOC}: $T$-interval connectivity
guarantees a common subgraph in the communication graphs
of every $T$ consecutive rounds. \cite{KOM11} studies agreement 
problems in this setting. Note that solving consensus is
relatively easy here, since the model assumes bidirectional and
always connected communication graphs.
In particular, $1$-interval-connectivity, the weakest form of
$T$-interval connectivity, corresponds to all nodes constituting a perpetually constant set of
source nodes.

In both lines of research, there is no notion of eventually
stabilizing behavior of dynamic networks. To the best of
our knowledge, the first instance
of a message adversary that guarantees eventual stable root components has been considered in 
\cite{BRS12:sirocco}: It assumed communication graphs
with a non-empty set of sources and long-living periods of stability $x=4D+1$.
\cite{SWSBR13:PODC,SWSBR15:NETYS} studies consensus
under a message adversary with comparably long-lived stability,
which gracefully degrades to general $k$-set agreement in case
of unfavorable conditions. However, this message adversary 
must also guarantee a certain influence relation between 
subsequently existing partitions. \cite{SWS16:ICDCN}
established a characterization of uniform consensus solvability/impossibility
for longer stability periods. In particular, it provides a consensus
algorithm that works for stability periods of at least $2D+1$ but
does not require graph sequences where all graphs are rooted.

Finally, \cite{RS13:PODC} used message adversaries that allow a notion of ``eventually
forever'' to establish a relation to failure detectors. Albeit we
do not consider this ``extremal'' case in this paper, which solely
addresses short-lived stability, we note that interesting
insights can be drawn from this relation.

\section{Model} \label{sec:model}

We consider a set $\Pi$ of $n$ deterministic state machines,
called \emph{processes}, which communicate via message passing over unreliable
point-to-point links. Processes have unique identifiers and are typically 
denoted by $p, q$.
We call algorithms that do not depend on $n$ \emph{uniform} algorithms,
whereas algorithms that rely on at least some bound on $n$ are called \emph{non-uniform}.
In this paper, we will consider exclusively non-uniform algorithms, except for
\theoremref{thm:uniform-impossibility}, where we touch upon uniform algorithms as well.
 Processes never fail and operate
synchronously in lock-step rounds $r=1,2,\dots$, where each round 
consists of a phase of communication followed by a local state
transition of every process. In the communication phase of a round,
every process sends a message (possibly empty) to every other process 
in the system, and records the messages successfully received from the 
other processes. A \emph{message adversary} (see \sectionref{sec:adversary} for
the detailed definitions)
is a set of graph sequences that determine which messages are lost in each round.

The \emph{state} of a process $p$ at the end of its round $r$ computation
is denoted by $p^r$, and the collection of the round $r$ states of all processes
is called round $r$ \emph{configuration} $C^r$.
Those messages that are delivered by the message adversary in a given round $r>0$
are specified
via a digraph\footnote{We sometimes write $p \in \Gr$ instead of $p \in \Pi$ to stress that $p$ is a vertex of $\Gr$,
and sloppily write $(p \rightarrow q) \in \Gr$ instead of $(p \rightarrow q) \in E(\Gr)$.}
$\Gr = \langle \Pi, E^r \rangle$,
called the round~$r$ \emph{communication graph}.
An edge $(p \rightarrow q)$ is in $\Gr$ if and only if the round $r$ message of
$p$ sent to $q$ is not lost. We assume that every process $p$ always successfully
receives from itself, so the self-loops $(p \rightarrow p)$ are in every $\Gr$.
The \emph{in-neighborhood} of $p$ in $\Gr$, $\IN_p(\Gr) = \{ q \mid (q, p) \in \Gr)$
hence represents the processes from which $p$ received a message in round $r$.

A \emph{message adversary} is characterized by the set of infinite sequences of consecutive communication graphs
that it may generate, which are called \emph{admissible}. A sequence 
$\sigma$ of consecutive communication graphs, ranging from round $a$ to round $b$, is denoted as $\sigma = \Seqr{a}{b}$, where
$|\sigma| = b-a+1$, with $b=\infty$ for infinite sequences. Since we actually identify 
a message adversary with its set of admissible sequences, we can  
compare different message adversaries via a simple set inclusion.

We consider the \emph{consensus problem}, where each process $p$ starts with
some input value $x_p$ and has a dedicated write-once output variable $y_p$; eventually, every process needs to irrevocably
decide, i.e., assign a value to $y_p$
(\emph{termination}) that is the same at every process (\emph{agreement})
and was the input of some process (\emph{validity}).
The assignment of the input values for each process is summarized in
some \emph{initial configuration} $C^0$.
Given a message adversary $\MAD$ and a deterministic consensus algorithm $\Alg$, 
an (admissible) \emph{execution} or \emph{run} 
$\varepsilon = \Exec{C^0, \sigma}$ is uniquely determined by $C^0$ and
an admissible graph sequence $\sigma\in \MAD$.

Applying $\Alg$ and a finite sequence $\sigma'$ to a configuration $C$ of $\Alg$
yields the configuration $C' = \Exec{C, \sigma'}$ of $\Alg$.
As usual, we write $\varepsilon \indist{p} \varepsilon'$ if the finite or
infinite executions
$\varepsilon$ and $\varepsilon'$ are \emph{indistinguishable} to $p$
(i.e., the round $r$ state of $p$ is the same in both executions)
until $p$ decides.

\subsection*{Dynamic graph concepts}

First, we introduce the pivotal notion
of a \emph{root component} $R$, often called root for brevity,
which denotes the vertex-set of a
strongly connected component of a graph
where there is no edge from a process
outside of $R$ to a process in $R$.
\defref{def:root} gives its formal definition.

\begin{definition}[Root Component]\label{def:root}
  $R \neq \emptyset$ is a root
  (component) of graph $\G$, if it is the set of vertices of a strongly connected 
  component $\cal R$ of $\G$ and
  $\forall p \in \G, q \in R : (p \ra q) \in \G \Rightarrow p \in R$.
\end{definition}

It is easy to see that every graph has at least one root component.
A graph $\G$ that has a \emph{single} root component is called 
\emph{rooted}; its root component is denoted by $\Root(\G)$.
Clearly, a graph $\G$ is rooted if and only if it has a
rooted spanning tree: the root component is the union of the
roots of all the spanning trees of $\G$.
Hence, there is a directed path from every node of $\Root(\G)$ to every other node
of $\G$.

Conceptually, root components have already been employed
for solving consensus a long time ago:
The asynchronous consensus algorithm for initially dead processes introduced
in the classic paper \cite{FLP85} relies on a suitably constructed initial 
clique, which is just a special case of a root component.

In order to model stability, we rely on root components that
are present in every member of a (sub)sequence of communication graphs.
We call such a root component the \emph{\VSRC} of a sequence and stress that,
albeit the set of processes remains the same, the interconnection topology
between the processes of the root component and to the processes outside
may vary greatly from round to round.


\begin{definition}[\VSRCaps{}]\label{def:common-root} \label{def:single-root}
  We say that a non-empty sequence $\SeqrI$ of graphs has a \VSRC{} $R$,
  if and only if each $\Gr$ of the sequence
  is rooted and 
  $\forall i, j \in I : \Root(\G^i) = \Root(\G^j) =  R$.
  We call such a sequence a \Rooted{R} sequence.
\end{definition}

We would like to clarify that while ``rooted'' describes a graph property,
``\Rooted{R}'' describes a property of a sequence of graphs.



Given two graphs $\G = \langle V, E \rangle$,
$\G' = \langle V, E' \rangle$ with the same vertex-set $V$, let the
\emph{compound graph} $\G \circ \G' := \langle V, E'' \rangle$ where
$(p, q) \in E''$ if and only if for some $p' \in V:$ $(p, p') \in E$ and $(p', q) \in E'$.
Since we assume self-loops the compound graph can be written as 
the product of the adjacency matrices of $\G$ and $\G'$.

In order to model information propagation in the network, we
use a notion of \emph{causal past}:
Intuitively, a process $q$ is in $p$'s causal past, denoted
$q \in \CP{p}{r'}{r}$ if $q=p$ or if, by round $r$, $p$ received information
(either directly or via intermediate messages) that $q$ sent in round
$r'+1$.


\begin{definition}[Causal past]\label{def:causal-influence}
Given a sequence $\sigma$ of communication graphs that contains rounds $a$ and $b$,
the causal past of process $p$ from (the end of) round $b$ down to
(the end of) round $a$ is $\CP{p}{a}{b} = \{ p \}$ if $a = b$ or 
$\CP{p}{a}{b} = \IN_p(\G^{a+1} \circ \cdots \circ \G^b)$ if $a < b$.
\end{definition}

A useful fact about the causal past is that in
full-information protocols, where processes exchange their entire
state history in every round, we have $q \in \CP{p}{s}{r}$ if and
only if, in round $r$, $p$ knows $q^s$, the round $s$ state of $q$.

To familiarize the reader with our notation, we conclude this section with the following
technical \lemmaref{lem:information-propagation}.
It describes the information propagation in a graph sequence containing an
ordered set $G = \set{\G^{r_1}, \ldots, \G^{r_n}}$,
$i \neq j \Ra r_i \neq r_j$, and $i > j \Ra r_i > r_j$,
of $n$ distinct communication graphs, where any
$\G, \G' \in G$ are both rooted, but
$\Root(\G)$ is not necessarily the same as $\Root(\G')$.
As we have mentioned earlier, every $\G \in G$ has 
hence a rooted spanning tree and is therefore weakly connected.
In essence, the lemma shows that, by the end of round $r_n$,
each process $p$ received a
message from some process $q$ that was sent after $q$
was member of a root component of some graph of $G$.

\begin{lemma}\label{lem:information-propagation}
  Let $G = \set{\G^{r_1}, \ldots, \G^{r_n}}$ be an ordered set of rooted
  communication graphs 
  and let $X \subseteq \Pi$ with
  $X \cap \Root(\Gr) \neq \emptyset$ for every $\Gr \in G$.
  For every $p \in \Pi$, there is some $q \in X$ ($q$ may depend on
$p$) and a $\G^r \in G$ s.t.\
  $q \in \Root(\G^r)$ and $q \in \CP{p}{r}{r_n}$.
\end{lemma}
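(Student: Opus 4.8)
The plan is to trace the causal past of $p$ backwards through the $n$ rooted graphs of $G$ and to close the argument with a counting (pigeonhole) step that crucially exploits $|G| = n = |\Pi|$. For $i \in \{1, \dots, n\}$ write $R_i = \Root(\G^{r_i})$ and set $V_i = \CP{p}{r_i}{r_n}$, the set of processes whose round-$(r_i{+}1)$ message has reached $p$ by round $r_n$; note $V_n = \{p\}$ by the $a=b$ case of \defref{def:causal-influence}. Using the self-loops present in every graph, I would first record two elementary facts about these sets: (a) \emph{monotonicity}, $\CP{p}{a'}{b} \supseteq \CP{p}{a}{b}$ whenever $a' \leq a \leq b$, since a process can ``wait'' via its self-loop before following a later path; and (b) a \emph{one-round expansion} identity, $\CP{p}{r_i-1}{r_n} = \IN_{V_i}(\G^{r_i})$, which follows directly from the definition of the compound graph $\G^{r_i} \circ (\G^{r_i+1} \circ \cdots \circ \G^{r_n})$. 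Since $r_{i-1} \leq r_i - 1$ for integer round numbers, combining (a) and (b) gives $V_{i-1} \supseteq \CP{p}{r_i-1}{r_n} = \IN_{V_i}(\G^{r_i})$.

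The heart of the proof is the following growth step: \emph{if $X \cap R_i \cap V_i = \emptyset$, then $\IN_{V_i}(\G^{r_i}) \supsetneq V_i$, and hence $V_{i-1} \supsetneq V_i$}. To see this, pick any $q \in X \cap R_i$, which exists by the hypothesis $X \cap R_i \neq \emptyset$; the assumption $X \cap R_i \cap V_i = \emptyset$ forces $q \notin V_i$. Since $V_i$ is non-empty ($p \in V_i$ by the self-loops) and $q$ lies in the root component of the rooted graph $\G^{r_i}$, there is a directed path in $\G^{r_i}$ from $q$ to some vertex of $V_i$; as this path starts outside $V_i$ and ends inside it, it must cross the boundary, yielding an edge $(a \ra b) \in \G^{r_i}$ with $a \notin V_i$ and $b \in V_i$. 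Then $a \in \IN_{V_i}(\G^{r_i}) \setminus V_i$, proving the strict inclusion. The subtle point here—and the step I expect to require the most care—is that the hypothesis only excludes the $X$-members of the root from $V_i$, not all root members; it is precisely the fact that $X$ meets every root, together with the ``reaches everything'' property of root components, that lets a single $X$-witness lying outside $V_i$ certify strict growth.

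With the growth step in hand, the lemma follows by contradiction. Suppose $X \cap R_i \cap V_i = \emptyset$ for every $i \in \{1, \dots, n\}$. Applying the growth step for $i = n, n-1, \dots, 2$ yields $V_{i-1} \supsetneq V_i$ each time, so $|V_1| \geq |V_n| + (n-1) = n$ and therefore $V_1 = \Pi$. But then $X \cap R_1 \cap V_1 = X \cap R_1$, which is non-empty by assumption, contradicting the case $i = 1$. Hence some index $i$ satisfies $X \cap R_i \cap V_i \neq \emptyset$; any $q$ in this intersection lies in $X$, belongs to $\Root(\G^{r_i})$, and satisfies $q \in \CP{p}{r_i}{r_n}$, which is exactly the claim with $r = r_i$. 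Note that the counting step uses $|G| = n = |\Pi|$ in an essential way: the $n-1$ strict enlargements are exactly enough to saturate $\Pi$ and force the final contradiction.
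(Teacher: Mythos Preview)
Your proof is correct, but it takes the dual route to the paper's. The paper argues \emph{forward}: it fixes no particular target process and instead defines
\[
S^{r_i} := \bigl\{\, p' \in \Pi \;\bigm|\; \exists q \in X,\ \exists r \in \{r_1,\dots,r_i\}:\ q \in \Root(\G^r) \wedge q \in \CP{p'}{r}{r_i} \,\bigr\},
\]
the set of processes that have \emph{already} been reached by some $X$-root-witness by round $r_i$, and proves $|S^{r_i}| \geq i$ by forward induction (each rooted $\G^{r_{i+1}}$ either contributes a fresh $X$-root-member or pushes an edge out of $S^{r_i}$). You instead fix $p$ and run the light cone \emph{backward}, showing that the causal past $V_i = \CP{p}{r_i}{r_n}$ strictly grows as $i$ decreases whenever it misses $X \cap R_i$. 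Both arguments hinge on the same two ingredients---root components reach every vertex, and $n$ strict enlargements in a universe of size $n$ force saturation---so the counting step and the ``boundary-crossing edge'' are identical in spirit. The paper's version proves the claim for all $p$ simultaneously at the cost of a slightly heavier auxiliary set; your version is arguably cleaner for a single $p$ and makes the role of the causal-past monotonicity more explicit. One minor notational point: you write $\IN_{V_i}(\G^{r_i})$ for a set $V_i$, which the paper only defines for a single process; the intended meaning $\bigcup_{v \in V_i} \IN_v(\G^{r_i})$ is clear, but it would be worth spelling out.
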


\begin{proof}
  Let $S^{r_i}$ be the set of those processes that, in round $r_i$, have received information from a 
  process of $X$ after it was member of a root component so
  far in a graph in $G$.
  Formally,
  $S^{r_i} := \{p \in \Pi \mid
    \exists q \in X,
    \exists r \in \{r_1, \ldots, r_i \}  \colon
    q \in \Root(\Gr) \land q \in \CP{p}{r}{r_i}\}$.
  In order to show the lemma, we prove by induction on $i$ from $1$ to $n$ that
  $\lvert S^{r_i} \rvert \geq i$.
  We use the abbreviation $X^{r_i} := X \cap \Root(\G^{r_i})$.
  The base of the induction, $\lvert S^{r_1} \rvert \geq 1$, follows from the observation that
  $X^{r_1} \subseteq S^{r_1}$.
  For the induction step, assume for $1 \leq i < n$ that $\lvert S^{r_i} \rvert \geq i$.
  We show that then $\lvert S^{r_{i+1}} \rvert \geq i+1$.
  If $\lvert S^{r_i} \rvert \geq n$, as obviously 
  $S^{r_i} \subseteq S^{r_{i+1}}$, we are done.
  If $\lvert S^{r_i} \rvert < n$, consider that
  $X^{r_{i+1}} \subseteq S^{r_{i+1}}$.
  If $\lvert X^{r_{i+1}} \setminus S^{r_i} \rvert \geq 1$, we immediately have
  $\lvert S^{r_{i+1}}\rvert > \lvert S^{r_i} \rvert$.
  If $\lvert X^{r_{i+1}} \setminus S^{r_i} \rvert = 0$,
  note that there is a path in $\G^{r_{i+1}}$ from every $q \in X^{r_{i+1}}$ to every
  $p \in \Pi$.
  Hence, $(u \ra v) \in \G^{r_{i+1}}$ for some $u \in S^{r_{i}}$,
  $v \in \Pi \setminus S^{r_{i}}$, since we assumed $\abs{S^{r_i}} < n$.
  As $u \in S^{r_i}$, there is some $q \in \bigcup_{r=r_1}^{r_i} X^r$ with
  $q \in \CP{u}{r}{r_i}$.
  By \defref{def:causal-influence}, since $(u \ra v) \in \G^{r_{i+1}}$, we have
  $q \in \CP{v}{r}{r_{i+1}}$ and thus $v \in S^{r_{i+1}} \setminus S^{r_i}$ which
  implies
  $\lvert S^{r_{i+1}} \rvert > \lvert S^{r_i} \rvert$.
\end{proof}

\tikzset{
chain/.style={decorate, decoration={snake, segment length=8pt,
pre length=0.5mm, post length=2mm}, ->},
lossy/.style={<->, densely dotted, >=stealth}}
\tikzset{nchain/.style={decorate, decoration=zigzag, ->}}

\newcommand\head{\mbox{head}}
\newcommand\tail{\mbox{tail}}
\newcommand\rTERM{\tau}
\newcommand\BBCommGraph{\useasboundingbox (-.7, -.9) rectangle (3.4, 1.3);}
\newcommand\BBSingleGraph{\useasboundingbox (-.7, -.9) rectangle (2.3, 1.3);}
\newcommand\BBExecLabel{\useasboundingbox (-.2, -.7) rectangle (.2, 1.1);}

\section{Message Adversaries} \label{sec:adversary}



First, we introduce the adversary that adheres to \emph{dynamic diameter} $D$,
which gives a bound
on the duration of the information propagation from a \VSRC{}
to the entire network. We showed in \cite[Lem.~1]{SWS15:arxiv}
that always $D\leq n-1$; a priori restricting $D < n-1$ also allows
modelling dynamic networks where information propagation is guaranteed 
to be faster than in the worst case (as in expander graphs \cite{BRSSW16:TR}, 
for example).

\begin{definition}[Dynamic diameter D]\label{def:dynamic-diameter}
$\DBMA$ is the message adversary that guarantees dynamic (network)
diameter $D$, i.e., for all graph sequences $\sigma \in \DBMA$,
for all subsequences
$\left(\G^{r_1}, \ldots, \G^{r_1+D-1} \right) \subset \sigma$
of $D$ consecutive \Rooted{R} communication graphs,
we have $R \subseteq \CP{p}{r_1-1}{r_1+D-1}$ for every $p\in\Pi$.
\end{definition}

The following liveness property, \emph{eventual stability},
ensures that eventually every graph sequence $\sigma$ has
a \Rooted{R} subsequence $\sigma' \subseteq \sigma$ of length $x$.
Here $\Sigma$ denotes the unrestricted message adversary, i.e., the set of
all communication graph sequences.

\begin{definition}
  $\BLiveness(x) := \{ \sigma \in \Sigma \mid \exists R \subseteq \Pi \colon$
   some  $\sigma' \subseteq \sigma$ with $|\sigma'| \geq x$ is
  $R$-rooted $\}$.
  \label{def:liveness}
\end{definition}

For finite $x$, $\BLiveness(x)$ alone is insufficient for solving consensus:
Arbitrarily long sequences of graphs that are not rooted
before the stability phase occurs can fool any consensus algorithm to make
wrong decisions. For this reason, we introduce a safety property 
in the form of the message adversary that generates only rooted graphs.

\begin{definition}
  $\BSafety := \{ \sigma \in \Sigma \mid$
    every $\Gr$ of $\sigma$ is rooted $\}$.
\end{definition}


The short-lived eventually stabilizing message adversary $\BStable(D+1)$
used throughout the main part of our paper
adheres to the dynamic diameter $D$, guarantees that every $\Gr$ is
rooted and that every sequence has a subsequence of at least $x=D+1$
consecutive communication graphs with a \VSRC.
Since processes are aware under which adversary they are, they have common
knowledge of the dynamic diameter $D$ and the duration of the stability phase $x$.

\begin{definition}\label{def:bstableshort}
  We call
$\BStable(x) = \BSafety \cap \BLiveness(x) \cap \DBMA$
  the short-lived eventually stabilizing message adversary with stability period $x$.
\end{definition}

We observe that 
$\BLiveness(x) \supseteq \BLiveness(D)$ for any $1 \leq x \leq D$, 
hence it follows that $\BStable(x) \supseteq \BStable(D)$.
This simple set inclusion turns out to be quite useful for the next section.

\section{Impossibility Results and Lower bounds} \label{sec:impossibility}


Even though processes know the dynamic diameter $D$, for very
short stability periods, this is not enough to solve consensus.
In \theoremref{thm:uniform-impossibility}, we prove
that if processes do not have access to
an upper bound on $n$ (some $\nbound$ with $\nbound \geq n$),
i.e., when the algorithm is uniform,
solving consensus is impossible if the period $x$ of eventual stability
is shorter than $2D$:
Here, processes can never be sure
whether a \VSRC{} occurred for at least $D$ rounds, albeit
detecting such a \VSRC{} is necessary to satisfy validity.

\begin{figure*}[h]
  \centering
  \scalebox{\impscale}{
    \begin{subfigure}{0.05\linewidth}
      \scalebox{\impscale}{
        \begin{tikzpicture}
          \node at (0, 0.5) {$\sigma_1$:};
          \BBExecLabel
        \end{tikzpicture}
      }
    \end{subfigure}
    \begin{subfigure}{.26\linewidth}
      \scalebox{\impscale}{
        \begin{tikzpicture}[tight background]
          \node[shift={(0,1)}] (p1) at (0:0) {$p_1$};
          \node (p2) at (1,1) {$p_2$};
          \node[shift={(1,-0.6)}] (pD) at (0:0) {$p_D$};
          \node (pD1) at (0,-0.6) {$p_{D+1}$};
          \node (pD2) at (2,-0.6) {$p_{D+2}$};

          \node[rotate=90] (dots) at (1,0.2) {\dots};
          \draw[->] (p2) -- (dots);
          \draw[<-] (pD) -- (dots);

          \draw[->] (p1) -- (p2);
          \draw[->] (p2) -- (dots);
          \draw[<-] (pD) -- (dots);
          \draw[->] (pD) -- (pD1);
          \draw[->] (pD) -- (pD2);

          \node (left-paren)  at ( -.6, .2) {$\left( \vphantom{\rule{1pt}{32pt}} \right.$};
          \node (right-paren) at ( 2.85, .2) {$\left. \vphantom{\rule{1pt}{32pt}}
          \right)_{1}^{2D-1}$};
          \BBCommGraph
        \end{tikzpicture}
      }
    \end{subfigure}
    \begin{subfigure}{.26\linewidth}
      \scalebox{\impscale}{
        \begin{tikzpicture}[tight background]
          \node (p1) at (0,-0.6) {$p_{D+1}$};
          \node (pD) at (1,1) {$p_1$};
          \node (p2) at (0,1) {$p_{D+2}$};
          \node (pD1) at (1,-0.6) {$p_{D}$};

          \node[rotate=90] (dots) at (1,0.2) {\dots};
          \draw[->] (pD) -- (dots);
          \draw[<-] (pD1) -- (dots);

          \draw[->] (p1) -- (p2);
          \path (p2) edge[lossy, ->, bend left] (p1);
          \draw[->] (p2) -- (pD);

          \node (left-paren)  at ( -.6, .2) {$\left( \vphantom{\rule{1pt}{32pt}} \right.$};
          \node (right-paren) at ( 1.6, .2) {$\left. \vphantom{\rule{1pt}{32pt}}
          \right)_{2D}^{\infty}$};
          \BBCommGraph
        \end{tikzpicture}
      }
    \end{subfigure}
    \begin{subfigure}{.26\linewidth}
      \begin{tikzpicture}[tight background]
        \BBCommGraph
      \end{tikzpicture}
    \end{subfigure}
    \begin{subfigure}{.26\linewidth}
      \begin{tikzpicture}[tight background]
        \BBSingleGraph
      \end{tikzpicture}
    \end{subfigure}
  }
  
  \scalebox{\impscale}{
    \begin{subfigure}{.05\linewidth}
      \scalebox{\impscale}{
        \begin{tikzpicture}
          \node at (0,0.5) {$\sigma_2$:};

          \BBExecLabel
        \end{tikzpicture}
      }
    \end{subfigure}
    \begin{subfigure}{.26\linewidth}
      \scalebox{\impscale}{
        \begin{tikzpicture}[tight background]
          \node[shift={(0,1)}] (p1) at (0:0) {$p_1$};
          \node (p2) at (1,1) {$p_2$};
          \node[shift={(1,-0.6)}] (pD) at (0:0) {$p_D$};
          \node (pD1) at (0,-0.6) {$p_{D+1}$};
          \node (pD2) at (2,-0.6) {$p_{D+2}$};
          \node (pn) at (2,1) {$p_{n}$};

          \node[rotate=90] (dots) at (1,0.2) {\dots};
          \draw[->] (p2) -- (dots);
          \draw[<-] (pD) -- (dots);

          \node[rotate=90] (dots) at (2,0.2) {\dots};
          \draw[->] (pD2) -- (dots);
          \draw[<-] (pn) -- (dots);

          \draw[->] (p1) -- (p2);
          \draw[->] (pD) -- (pD1);
          \draw[->] (pD) -- (pD2);

          \node (left-paren)  at ( -.6, .2) {$\left( \vphantom{\rule{1pt}{32pt}} \right.$};
          \node (right-paren) at ( 2.85, .2) {$\left. \vphantom{\rule{1pt}{32pt}}
          \right)_{1}^{D-1\phantom{2}}$};
          \BBCommGraph
        \end{tikzpicture}
      }
    \end{subfigure}
    \begin{subfigure}{.28\linewidth}
      \scalebox{\impscale}{
        \begin{tikzpicture}[tight background]
          \node[shift={(0,1)}] (p1) at (0:0) {$p_1$};
          \node (p2) at (1,1) {$p_2$};
          \node[shift={(1,-0.6)}] (pD) at (0:0) {$p_D$};
          \node (pD1) at (0,-0.6) {$p_{D+1}$};
          \node (pD2) at (2,-0.6) {$p_{D+2}$};
          \node (pn) at (2,1) {$p_{n}$};

          \node[rotate=90] (dots) at (1,0.2) {\dots};
          \draw[->] (p2) -- (dots);
          \draw[<-] (pD) -- (dots);

          \node[rotate=90] (dots) at (2,0.2) {\dots};
          \draw[->] (pD2) -- (dots);
          \draw[<-] (pn) -- (dots);

          \draw[->] (p1) -- (p2);
          \path (p2) edge[lossy, ->, bend left] (p1);
          \draw[->] (pD) -- (pD1);
          \draw[->] (pD) -- (pD2);

          \node (left-paren)  at ( -.6, .2) {$\left( \vphantom{\rule{1pt}{32pt}} \right.$};
          \node (right-paren) at ( 2.90, .2) {$\left. \vphantom{\rule{1pt}{32pt}}
          \right)_{D}^{2D-1}$};
          \BBCommGraph
        \end{tikzpicture}
      }
    \end{subfigure}
    \begin{subfigure}{.25\linewidth}
      \scalebox{\impscale}{
        \begin{tikzpicture}[tight background]
          \node (p1) at (0,-0.6) {$p_{D+1}$};
          \node (pD) at (1,1) {$p_1$};
          \node (p2) at (0,1) {$p_{D+2}$};
          \node (pD1) at (1,-0.6) {$p_{D}$};
          \node (pD2) at (2,-0.6) {$p_{D+3}$};
          \node (pn) at (2,1) {$p_{n}$};

          \node[rotate=90] (dots) at (1,0.2) {\dots};
          \draw[->] (pD) -- (dots);
          \draw[<-] (pD1) -- (dots);

          \node[rotate=90] (dots) at (2,0.2) {\dots};
          \draw[->] (pD2) -- (dots);
          \draw[<-] (pn) -- (dots);

          \draw[->] (p1) -- (p2);
          \path (p2) edge[lossy, ->, bend left] (p1);
          \draw[->] (p2) -- (pD);
          \draw[->] (pD1) -- (pD2);

          \node (left-paren)  at ( -.6, .2) {$\left( \vphantom{\rule{1pt}{32pt}} \right.$};
          \node (right-paren) at ( 2.7, .2) {$\left. \vphantom{\rule{1pt}{32pt}}
          \right)_{2D}^{\tau}$};
          \BBCommGraph
        \end{tikzpicture}
      }
    \end{subfigure}
    \begin{subfigure}{.26\linewidth}
      \scalebox{\impscale}{ 
        \begin{tikzpicture}[tight background]
          \node[shift={(0.5,0.2)}] (p1) at (0:0) {$p_n$};

          \node (left-paren)  at ( 0, .2) {$\left( \vphantom{\rule{1pt}{32pt}} \right.$};
          \node (right-paren) at ( 1.2, .2) {$\left. \vphantom{\rule{1pt}{32pt}}
          \right)_{\tau+1}^{\infty}$};
          \BBSingleGraph
        \end{tikzpicture}
      }
    \end{subfigure}
  }
  \caption{
    Communication graph sequences of \theoremref{thm:uniform-impossibility}.
    A dotted edge
    represents and edge which is in $\G^i$ if and only if it is not in $\G^{i-1}$. 
    We assume there is an edge from every process depicted in the graph to every
    process not depicted in the graph.
  }
  \label{fig:D-uniform-imposs}
\end{figure*}
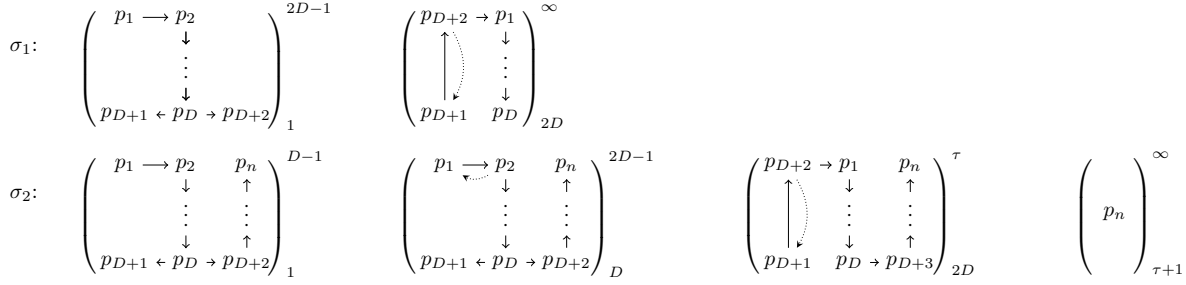

\begin{theorem} \label{thm:uniform-impossibility}
  There is no uniform consensus algorithm for $\BStable(x)$
  with $0 < x < 2D$.
\end{theorem}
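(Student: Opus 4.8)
The plan is to argue by contradiction: suppose some uniform algorithm $\Alg$ solves consensus under $\BStable(x)$ for a fixed $x$ with $0 < x \leq 2D-1$. I would exploit the defining weakness of a uniform algorithm, namely that a process never learns $n$ and hence can never be certain it has already heard from \emph{every} process. The adversary can therefore keep an entire block of processes completely hidden from a ``core'' of processes, which I would use to make a small system and a larger one look identical to the core. Concretely, I would instantiate the two templates $\sigma_1$ (a system of only $D+2$ processes) and $\sigma_2$ (a strictly larger system) of \figureref{fig:D-uniform-imposs}. In both, the singleton root component $\{p_1\}$ persists for rounds $1$ through $2D-1$, a \Rooted{\{p_1\}} window of length $2D-1 \geq x$, so $\BLiveness(x)$ is met; every depicted graph is rooted, so $\BSafety$ holds; and the graphs are wired (using the ``depicted-to-undepicted'' edges) so that every process lies within $D$ rounds of the current root, so $\DBMA$ holds. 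Hence both sequences, and also the auxiliary executions introduced below, are admissible for $\BStable(x)$, and $\Alg$ must decide in each.

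First I would pin down the indistinguishability. The additional processes of $\sigma_2$ ($p_{D+3},\dots,p_n$) only ever appear \emph{downstream} of the core $\{p_1,\dots,p_{D+2}\}$: information flows away from the core into the hidden tail but never back. Using \defref{def:causal-influence} I would verify that no tail process ever enters $\CP{q}{0}{\tau}$ for any core process $q$ up to the round $\tau$ at which $\sigma_2$ switches to its final root, so that $\sigma_1 \indist{q} \sigma_2$ for every core $q$ as long as $q$ has not yet decided and $\tau$ is chosen large enough. Since $\Alg$ must terminate in $\sigma_1$, I would fix $\tau$ beyond the latest decision round of the core in $\sigma_1$; the core then decides identically in $\sigma_1$ and $\sigma_2$, and by \emph{agreement} this common value is the global decision of both executions. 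The intuition for why $2D-1$ rounds do not suffice is exactly this: to justify deciding the value of an apparent root, a process must be sure that root was stable for at least $D$ rounds, since only then, by \defref{def:dynamic-diameter} and \lemmaref{lem:information-propagation}, has its value reached everyone; but confirming \emph{this} fact itself needs up to $D$ further rounds to propagate, during which a uniform process cannot exclude the hidden tail of $\sigma_2$.

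It then remains to force $\sigma_1$ and $\sigma_2$ to \emph{require} different decision values, which I expect to be the main obstacle. The difficulty is that the eventual permanent root of each execution, $\{p_{D+1},p_{D+2}\}$ in $\sigma_1$ and $\{p_n\}$ in $\sigma_2$, is not a self-only root from round $1$, so its members have already absorbed the inputs of others and I cannot simply invoke the folklore fact that a permanent self-only root decides its own input. I would bridge this gap with the toggled dotted edges of \figureref{fig:D-uniform-imposs}, whose presence alternates between consecutive rounds: flipping such an edge one round at a time yields a chain of executions in which each consecutive pair is indistinguishable to the process on the losing side of the flipped edge, so the decision value is invariant along the chain, while at its two ends the construction degenerates to genuine self-only roots carrying input $0$ and input $1$. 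Validity at the two anchors then forces the decision to be $0$ at one end and $1$ at the other, contradicting its invariance along the chain. The delicate parts I would have to get right are keeping the core inputs identical across \emph{all} executions (so that the cross-execution indistinguishability of the second paragraph is never broken) while still making the two anchors validity-force opposite values, and synchronising the finite decision rounds with the length of the hidden stable window so that every required indistinguishability holds up to the relevant decision.
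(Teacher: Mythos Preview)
Your overall architecture—two admissible executions that a core process cannot distinguish, combined with validity to force conflicting decisions—matches the paper. Where you go wrong is in what you call the ``main obstacle.'' The paper needs no chain argument and the dotted edges are not there to support one. All core processes $p_1,\dots,p_{D+2}$ carry input~$0$, so in $\varepsilon_1$ the process $p_{D+1}$ only ever hears from input-$0$ processes; validity then directly forces its decision to~$0$ (its view is indistinguishable from the all-$0$ run). Symmetrically, once $\tau$ is fixed from $\varepsilon_1$, one simply takes $n > \tau + D + 3$ so that $p_n$ sits far enough down the tail chain that none of the core's input-$0$ information has reached it by round~$\tau$; from $\tau+1$ on, $p_n$ is a self-only root, so its causal past is frozen and contains only input-$1$ processes, and validity forces it to decide~$1$. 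Agreement in $\varepsilon_2$ together with $\varepsilon_1 \indist{p_{D+1}} \varepsilon_2$ then yields the contradiction immediately.

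The toggled dotted edges serve a completely different purpose: they make the root component alternate every round in the post-window phases, so that no $D$-long $R$-rooted subsequence arises there. This keeps the sequences in $\DBMA$ (the diameter condition is vacuous when the root changes each round) while staying in $\BSafety$. They are a technical device to keep the executions admissible, not the hinge of an indistinguishability chain. Your proposed chain—morphing to ``genuine self-only roots carrying input $0$ and input $1$'' at the anchors while keeping core inputs fixed—is both unnecessary and, as sketched, underspecified: you never say which edges are flipped, why each flip preserves admissibility under $\DBMA$, or how the input-$1$ anchor is reached without touching core inputs. Drop the chain; the paper's direct validity argument via monochromatic causal pasts is all that is needed.
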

\begin{proof}
  As, for $x > x'$, it holds that $\BLiveness(x) \subset \BLiveness(x')$,
  it suffices to show that consensus is impossible under \MA{}
  $\MAD =  \BSafety \cap \BLiveness(2D-1) \cap \DBMA$.

  Suppose some algorithm $\Alg$ solves consensus
  under $\MAD$.
  We provide two admissible executions $\epsilon_1, \epsilon_2$
  (based on $\sigma_1$, resp.\ $\sigma_2$, from \figureref{fig:D-uniform-imposs})
  of $\Alg$ where $\epsilon_1 \indist{p_{D+1}} \epsilon_2$.
  We show that $p_{D+1}$ decides $0$ in $\epsilon_1$ and,
  for almost all values of $n$, process $p_n$ decides $1$ in $\epsilon_2$.
  
  Let $C^0$ be the initial configuration with input values
  $x_1 = \ldots = x_{D+2} = 0$ and $x_{D+3} = \ldots = x_n = 1$.

  Consider execution $\epsilon_1 = \Exec{C^0, \sigma_1}$
  with $\sigma_1$ from \figureref{fig:D-uniform-imposs}, where
  a dotted edge exists only in every second graph in a sequence, and all
  processes not depicted have an in-edge from every depicted process.
  $\sigma_1 \in \MAD$, since it guarantees eventual stability for $2D-1$
  rounds, adheres to the dynamic diameter $D$ and in every round the
  communication graph is rooted.
  By the assumed correctness of $\Alg$, there is a round $\tau$ by which
  every process has decided in $\varepsilon_1$.
  The decision must be $0$ because $p_{D+1}$ only ever saw processes that knew
  of input value $0$.  This is indistinguishable for $p_{D+1}$ from the
  execution where all processes did indeed start with input $0$, which,
  according to the validity property of consensus, implies a decision on $0$.

  Now, consider the execution $\varepsilon_2 = \Exec{C^0, \sigma_2}$ with
  $\sigma_2$ from \figureref{fig:D-uniform-imposs}.
  Again, $\sigma_2 \in \MAD$, since $p_n$ is a \VSRC{} for $r\geq \tau+1$.
  In every round $r \leq \tau$, $p \in \{ p_{D+1}, p_{D+2} \}$ have the
same view in $\varepsilon_1$ and $\varepsilon_2$: This is immediately
obvious for $1 \leq r \leq D-1$. For $D \leq r \leq 2D-1$, the
view of $p_1$ is different, but this difference is not revealed
to $p$ by the end of round $2D-1$. Finally, in rounds $2D \leq r \leq \tau$,
the processes $\{ p_{D+1}, p_{D+2} \}$ hear only from themselves in
both executions, hence maintain $\epsilon_1 \indist{p_{D+1}} \epsilon_2$.

  Consequently, by round $\tau$, $p_{D+1}$ has decided $0$.
  Yet, in executions where $n > \tau + D + 3$, according to $\epsilon_2$ in
  \figureref{fig:D-uniform-imposs}, we have that $p_n$ never saw a process that had an
  input value different from $1$.
  By validity and an analogous argument as above, $p_n$ must hence decide $1$
  in $\epsilon_2$ here, which provides the required contradiction.
\end{proof}

As our next result, we present a lower bound for the duration 
$x$ of the stable period: We prove that even 
in the non-uniform case, consensus is impossible under
$\BStable(x)$ if $x \leq D$ (\theoremref{thm:D-imposs}).
Note that this result improves the lower bound $x \geq D-1$ established
in \cite{BRS12:sirocco} and thus reveals that the latter was not tight.

Our lower bound can be seen as a generalization of the ``lossy-link''
impossibility from \cite[Theorem 2]{SWK09}, a particular formalization
of consensus for two processes.
There, it was shown that consensus is impossible in a two-process
system where all messages except one may get lost in every round.
In our terminology, this means that, for $n=2$ and $D=1$, consensus is
impossible under $\BStable(1)$, even if processes are aware of the
size of the system. For general $D$, \theoremref{thm:D-imposs} below shows that
a stability period of $D$ or less rounds is insufficient
for solving consensus for any value of $\nbound$ as well.
Informally, the reason is that there are executions where,
even with a stability phase of $D$ rounds,
some process cannot precisely determine the
root component of the stability window.
In \figureref{fig:D-imposs}, for instance, $p_{D+1}$ cannot determine whether
$p_1$ alone or $p_1$ and $p_2$ together constitute the root component after a
sequence of $D$ successive occurrences of $\G_a$, respectively, $\G_b$.
The determination of this root
component is crucial, however, since any root component could be the ``base'' for a
decision in the suffix of some indistinguishable execution.
\iftoggle{TR}{}{The formal proof can be found in the appendix.}

\iftoggle{TR}{
  Our impossibility proof relies on a bivalence argument:
Consider some algorithm $\Alg$ that solves the binary consensus problem, where, 
for every process $p$, the initial value $x_p \in \{0, 1 \}$.
Given some message adversary $\MAD$,
in analogy to \cite{FLP85}, we call a configuration
$C = \Exec{C^0, \sigma}$ of $\Alg$
\emph{univalent} or, more specifically,
\emph{$v$-valent}, if all processes 
decided $v$ in $\Exec{C, \sigma'}$ for any
$\sigma'$ where $\sigma \sigma' \in \MAD$.
We call $C$ \emph{bivalent}, if it is not univalent.

Before stating our main theorem, we need to establish an essential
technical lemma.
It shows for $n>2$ that by adding/removing a single edge at a time, we can
arrive at any desired rooted communication graph 
when starting from any other rooted communication graph.
Furthermore, during this construction, we can avoid any graphs 
that contain a certain ``undesirable'' root component $R''$.

\begin{lemma} \label{lem:increment-root-grid}
  Let $n>2$, $\G$ be a rooted communication graph with $\Root(\G) = \{ R \}$,
  $\G'$ be a rooted communication graph with $\Root(\G') = \{ R' \}$, and $R''$ be
  some root component with $R'' \neq R$ and $R'' \neq R'$.
  Then, there is a sequence of communication graphs,
  $\G = \G_1, \ldots, \G_k = \G'$ s.t.\ each $\G_i$ of the sequence
  is rooted, $\Root(\G_i) \neq R''$, and,
  for $1 \leq i < k$, $\G_i$ and $\G_{i+1}$ differ only in a single edge.
\end{lemma}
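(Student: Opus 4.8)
My idea is to route both $\G$ and $\G'$ to a common \emph{canonical} graph by single-edge edits, never letting the root component become $R''$. For a vertex $h$, let $\Star{h}$ denote the star in which $h$ has an outgoing edge to every other vertex (together with all self-loops); its unique root component is the singleton $\{h\}$. I will show that, for a suitably chosen hub $h$, each of $\G,\G'$ can be transformed into $\Star{h}$ along rooted graphs whose root component is never $R''$; concatenating the first path with the reverse of the second then proves the lemma. The basic tool is that \emph{adding} an edge to a rooted graph leaves it rooted, since a vertex that reaches all others keeps doing so under new edges; hence additions are always safe, and only edge \emph{removals} can destroy rootedness and must be controlled.

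\textbf{The generic case $R'' \neq \Pi$.} Here I pick any hub $h \notin R''$, which exists precisely because $R'' \neq \Pi$, and drive $\G$ to $\Star{h}$ in three stages: (i) add every missing edge $h \ra v$; (ii) remove every edge $u \ra h$ with $u \neq h$; (iii) remove every remaining edge that is not of the form $h \ra v$ or a self-loop. The key invariant is that throughout this process the current root component is \emph{either} $\Root(\G)=R$ \emph{or} a set containing $h$: while $h$ has not yet reached all vertices the root is still $R$ (the additions only merge downstream components and create no edge into $R$), and as soon as $h$ reaches every vertex, $h$ must lie in the root component (otherwise that component would have an incoming edge), where it then stays during (ii)--(iii) because in those stages $h$ still reaches everyone directly and all removals preserve rootedness. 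Since $R \neq R''$ and $h \notin R''$, no intermediate root component equals $R''$. The same hub $h$ works for $\G'$, as $R' \neq R''$ and $h\notin R''$, so both graphs reach the common target $\Star{h}$ and the lemma follows in this case.

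\textbf{The remaining case $R'' = \Pi$.} Now one cannot choose a hub outside $R''$, but avoiding $R''$ here just means never producing a \emph{strongly connected} graph, i.e. keeping the root a proper subset of $\Pi$. I therefore use hubs \emph{inside} the roots: transform $\G$ to $\Star{h}$ with $h \in R$, and $\G'$ to $\Star{h'}$ with $h' \in R'$. With an in-root hub, stages (i)--(iii) keep the root component contained in $R$ (resp. $R'$), hence a proper, non-$\Pi$ subset. It then remains to connect the two stars: I add the edges $h' \ra v$ for $v \neq h$ (root stays $\{h\}$), then add $h' \ra h$ (root becomes $\{h,h'\}$), then delete $h \ra h'$ (root becomes $\{h'\}$), and finally delete the surplus edges $h \ra v$. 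The intermediate root components here are $\{h\}$, $\{h,h'\}$ and $\{h'\}$, all of size at most $2 < n$ and thus different from $\Pi$. This is exactly where the hypothesis $n>2$ is needed.

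\textbf{Main obstacle.} The delicate point is controlling the root during the ``hand-over'', when the hub takes over the role of root, and in the case $R'' = \Pi$, where no hub avoids $R''$ and one must route through small (size $\le 2$) root components kept distinct from $\Pi$ via $n>2$. Verifying the invariant ``the root is $R$ or contains $h$'' rigorously --- in particular that $h$ joins the root precisely once it reaches all vertices, and that every removal in stages (ii)--(iii) preserves rootedness because $h$ still reaches everyone --- is the technical heart of the argument; the remainder is bookkeeping of single-edge edits.
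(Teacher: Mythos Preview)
Your argument is correct and takes a genuinely different route from the paper's. The paper first isolates a reusable sub-step: from any rooted graph with root $\bar R$ one can reach any rooted graph with root $\bar R'$ satisfying $|\bar R \,\triangle\, \bar R'| \le 1$ by single-edge edits, keeping all intermediate roots in $\{\bar R,\bar R'\}$; it then strings these steps along a chain $R=R_1,\dots,R_l=R'$ in the Hasse diagram of $(2^\Pi,\subseteq)$ that avoids $R''$, invoking $n>2$ to guarantee two distinct upstream paths to a common superset so that the forbidden node can be bypassed. You instead drive everything to a canonical star $\Star{h}$ and dispose of the forbidden $R''$ by a direct case split on whether $R''=\Pi$. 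This is more elementary---no poset detour, no three-phase reconfiguration per root change---and your invariant ``the current root equals $R$ or contains $h$'' is clean to verify (indeed, in a rooted graph a vertex lies in the root iff it reaches every vertex, which makes both the hand-over in stage~(i) and the safety of all removals in stages~(ii)--(iii) immediate). The paper's approach, on the other hand, exposes the one-vertex root-change as an independent primitive and gives tighter control over the intermediate root components. Both proofs use $n>2$ at a single point, but for different reasons: the paper needs it for the Hasse-diagram bypass, whereas you need it only in the case $R''=\Pi$ to ensure that the transitional root $\{h,h'\}$ in the star-to-star bridge stays a proper subset of $\Pi$.
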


\begin{proof}
  We show that for any rooted communiaction graph $\incG$ with $\Root(\G_i) = \incR$,
  there is such a sequence $\incG = \incG_1, \ldots, \incG_j = \incG'$
  for any communication graph $\incG'$ with $\Root(\incG')=\incR'$
  if $\incR'$
  differs from $\incR$ in at most one process, i.e.,
  $\lvert \incR' \cup \incR \setminus \incR' \cap \incR \rvert \leq 1$.
  Repeated application of this fact implies the lemma, because for $n>2$ we can always
  find a sequence $R = R_1, \ldots, R_l = R'$ of subsets of $\Pi$  s.t.\ 
  for
  each $R_i$ of the sequence we have $R_i \neq R''$ and, for $1 \leq i < l$,
  $R_i$ differs from $R_{i+1}$ by exactly one process.
  To see this, we observe that in the Hasse diagram of the power set of $\Pi$,
  ordered by set inclusion, there are always two upstream paths leading from any 
two subsets of $\Pi$ to a common successor.

  We sketch how to construct the desired communication graphs $\incG_i$ of the sequence
  in three phases.

  \emph{Phase 1:} Remove all edges (one by one) between nodes of $\incR$ until only a cycle (or, in general, a 
circuit) remains,
  remove all edges between nodes outside of $\incR$ until only chains going out from
  $\incR$ remain.

  \emph{Phase 2:} If we need to add a node $p$ to $\incR = \Root(\G_i)$ to arrive
at $\incR'$, for some $q \in \incR$,
  first
  add $(q \ra p)$.
  For any $q' \neq q$ where $(q' \ra p) \in \incG_i$ with
  $p \neq q'$,
  remove $(q' \ra p)$.
  Finally, add $(p \ra q)$.

  If we need to remove a node $p$ from $\incR$ to arrive
  at $\incR'$, for any
  $(q \ra p)$, $(p \ra q') \in \incG_i$,
  with $q, q' \in \incR$
  subsequently
  add $(q \ra q')$ and $(q' \ra q)$, then
  remove $(p \ra q)$ and $(p \ra q')$.
  Note that we perform this step also when $q = q'$.

  \emph{Phase 3:} Since we now already have some communication graph $\incG_i$ with $\Root(\incG_i) = \incR'$,
  it is easy to add/remove edges one by one to arrive at the topology of $\incG'$.
  First, we add edges until the nodes of $\incR'$ are completely connected
  among each other, the nodes not in $\incR'$ are completely connected among
  each other, and there is an edge from every node of $\incR'$ to each node not in $\incR'$.
  Second, we remove the edges not present in $\incG'$.
\end{proof}

}{}

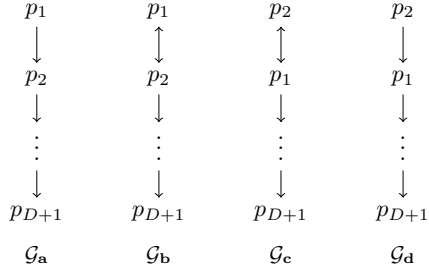
\begin{figure}[h]
  \centering
  \captionsetup[subfigure]{justification=centering}
  \scalebox{\impscale}{
    \begin{subfigure}[t]{.2\linewidth}
      \centering
      \begin{tikzpicture}

        \node (p1) at               ( 0,  0) {$p_1$};
        \node (p2) at               ( 0, -1) {$p_2$};
        \node[rotate=90] (dots) at  ( 0, -2.0) {\dots};
        \node (pD) at               ( 0, -3) {$p_{D+1}$};

        \draw[->] (p1) to (p2);
        \draw[->] (p2) to (dots);
        \draw[->] (dots) to (pD);
      \end{tikzpicture}
      \caption*{$\mathbf{\G_a}$}
      \label{fig:D-imposs-a}
    \end{subfigure}
    \begin{subfigure}[t]{.2\linewidth}
      \centering
      \begin{tikzpicture}

        \node (p1) at               ( 0,  0) {$p_1$};
        \node (p2) at               ( 0, -1) {$p_2$};
        \node[rotate=90] (dots) at  ( 0, -2.0) {\dots};
        \node (pD) at               ( 0, -3) {$p_{D+1}$};

        \draw[<->] (p1) to (p2);
        \draw[->] (p2) to (dots);
        \draw[->] (dots) to (pD);
      \end{tikzpicture}
      \caption*{$\mathbf{\G_b}$}
      \label{fig:D-imposs-b}
    \end{subfigure}
    \begin{subfigure}[t]{.2\linewidth}
      \centering
      \begin{tikzpicture}

        \node (p2) at               ( 0,  0) {$p_2$};
        \node (p1) at               ( 0, -1) {$p_1$};
        \node[rotate=90] (dots) at  ( 0, -2.0) {\dots};
        \node (pD) at               ( 0, -3) {$p_{D+1}$};

        \draw[<->] (p1) to (p2);
        \draw[->] (p1) to (dots);
        \draw[->] (dots) to (pD);
      \end{tikzpicture}
      \caption*{$\mathbf{\G_c}$}
      \label{fig:D-imposs-c}
    \end{subfigure}
    \begin{subfigure}[t]{.2\linewidth}
      \centering
      \begin{tikzpicture}

        \node (p2) at               ( 0,  0) {$p_2$};
        \node (p1) at               ( 0, -1) {$p_1$};
        \node[rotate=90] (dots) at  ( 0, -2.0) {\dots};
        \node (pD) at               ( 0, -3) {$p_{D+1}$};

        \draw[<-] (p1) to (p2);
        \draw[->] (p1) to (dots);
        \draw[->] (dots) to (pD);
      \end{tikzpicture}
      \caption*{$\mathbf{\G_d}$}
      \label{fig:D-imposs-d}
    \end{subfigure}
  }
  \caption{
    Communication graphs for \theoremref{thm:D-imposs}.
    We assume there is an edge from every process depicted in the graph to every
    process not depicted in the graph.
  }
  \label{fig:D-imposs}
\end{figure}
\begin{theorem}
  \iftoggle{APPENDIX}{}{
    \label{thm:D-imposs}}
  There is no 
  non-uniform consensus algorithm
for \MA{} $\BStable(x)$
  with $1 \leq x \leq D$. This holds even if the adversary must guarantee 
  that, in addition to $\BStable(x)$, the first $D$ rounds are \Rooted{R} (provided the 
processes do not know $R$ a priori).
\end{theorem}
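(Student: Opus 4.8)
The plan is to prove the theorem for the single value $x=D$, since the full range follows for free: $\BLiveness(x)\supseteq\BLiveness(D)$ gives $\BStable(D)\subseteq\BStable(x)$, so any algorithm correct for $\BStable(x)$ is a fortiori correct for $\BStable(D)$, and it suffices to refute the latter. Likewise, adding the requirement that the first $D$ rounds be uniformly $R$-rooted only shrinks the admissible set, so establishing impossibility under this constrained adversary is the stronger statement and implies the plain one; consequently every sequence I build will be rooted in each round, respect the dynamic diameter, and keep its first $D$ rounds uniformly $R$-rooted. The case $n=2$ forces $D=1$ and is exactly the lossy-link impossibility of \cite{SWK09}, which I invoke as a base case; henceforth I assume $n>2$, the regime in which \lemmaref{lem:increment-root-grid} applies. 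Assuming towards a contradiction that a non-uniform $\Alg$ solves consensus, I argue by bivalence in the sense defined above.

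The engine of the argument is the gadget of \figureref{fig:D-imposs} together with the diameter-$D$ delay. I fix binary inputs with $x_{p_1}=0$ and $x_{p_2}=1$. Because every graph there routes information to $p_{D+1}$ only along a length-$D$ chain, the sole difference between $\G_a^D$ (root $\{p_1\}$) and $\G_b^D$ (root $\{p_1,p_2\}$) --- whether $p_1$ has heard $p_2$ --- cannot reach $p_{D+1}$ within $D$ rounds, so $p_{D+1}$ has the same state after either window, i.e.\ $\Exec{C^0,\G_a^D}\indist{p_{D+1}}\Exec{C^0,\G_b^D}$; symmetrically $\Exec{C^0,\G_c^D}\indist{p_{D+1}}\Exec{C^0,\G_d^D}$. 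Crucially, $p_{D+1}$ is the \emph{only} process whose state is common to the two runs, so I pair each such window with the suffix that makes $p_{D+1}$ the perpetual sole root (every other process a sink: rooted, of diameter $1$, admissible since $n>2$): then $p_{D+1}$ never hears any differing process, runs identically, and decides the same value in both, whence by agreement both full runs decide alike. I also record two \emph{forced} runs: keeping $p_1$ as the perpetual root ($\G_a^\infty$), $p_1$ only ever hears input $0$, so that run is indistinguishable to $p_1$ from the all-$0$ run and must decide $0$; symmetrically $\G_d^\infty$ forces $1$.

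First I would show that some length-$D$ uniformly rooted window yields a \emph{bivalent} configuration. Otherwise every window-configuration is univalent; since $\G_a^\infty$ forces $0$, the window $\G_a^D$ is $0$-valent, and since $\G_d^\infty$ forces $1$, the window $\G_d^D$ is $1$-valent. But the $p_{D+1}$-sole-root suffix shows $\G_a^D$ and $\G_b^D$ decide alike, and $\G_c^D$ and $\G_d^D$ decide alike, while \lemmaref{lem:increment-root-grid} lets me morph $\G_b$ into $\G_c$ (both rooted at $\{p_1,p_2\}$) through single-edge rooted changes each invisible to $p_{D+1}$ during the window. Hence the common decision value would be constant all the way from $\G_a^D$ to $\G_d^D$, contradicting $0\neq 1$; so a bivalent window-configuration $C$ exists, and its stability window is already spent.

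From $C$ I then run the usual critical-step extension over the subsequent free rounds, so that liveness is no longer a concern. If every one-round rooted extension of a bivalent configuration were univalent, a $0$-valent $C\circ\G$ and a $1$-valent $C\circ\G'$ would coexist; \lemmaref{lem:increment-root-grid} joins $\G$ to $\G'$ by single-edge rooted graphs avoiding the one root $R''$ that must be excluded to keep the continuation admissible, so somewhere a single-edge change flips the valence. Yet the two resulting configurations differ only in the receiving endpoint's state, which a muting suffix hides from every other process, contradicting opposite valence; hence some extension stays bivalent. Iterating produces an admissible, window-respecting run in which no process ever decides, contradicting termination. I expect the main obstacle to be precisely this bookkeeping: guaranteeing that each single-edge modification really remains invisible to $p_{D+1}$ (within the window) or to the muted receiver (in the suffix) up to decision time, while every intermediate sequence simultaneously stays rooted, diameter-compliant, and uniformly $R$-rooted over its first $D$ rounds --- reconciling the muting suffixes with the diameter bound and with the excluded root $R''$ of \lemmaref{lem:increment-root-grid} is the delicate part.
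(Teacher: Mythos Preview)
Your overall architecture matches the paper's proof: reduce to $x=D$, handle $n=2$ via the lossy link, then run a bivalence argument whose base case uses the four graphs of \figureref{fig:D-imposs} and whose induction step uses \lemmaref{lem:increment-root-grid} together with a star-suffix indistinguishability. The induction step is essentially right; the paper adds one preparatory move you only gesture at --- before invoking the lemma it first replaces each of $\G_0,\G_1$ by a single-edge neighbour whose root differs from $\Root(\G^r)$, so that the excluded $R''$ in \lemmaref{lem:increment-root-grid} can be taken to be $\Root(\G^r)$ and no two consecutive rounds after the initial window share a root (this is exactly what keeps the constructed run inside $\DBMA$).

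There is, however, a real gap in your base case, at the step where you bridge $\G_b^D$ and $\G_c^D$. You claim that \lemmaref{lem:increment-root-grid} morphs $\G_b$ into $\G_c$ through single-edge rooted changes ``each invisible to $p_{D+1}$ during the window''. The lemma gives you no such invisibility: it controls only the root, not which edges are touched. Concretely, $\G_b$ and $\G_c$ differ in which of $p_1,p_2$ feeds $p_3$; any morphing must at some point alter an in-edge of $p_3$ (and the lemma's Phase~3 even routes through a near-complete graph in which, e.g., $(p_1\to p_{D+1})$ is present). Such a change reaches $p_{D+1}$ along the length-$(D-2)$ tail in at most $D-1$ rounds, so $p_{D+1}$'s state after the $D$-round window \emph{does} differ between adjacent intermediate graphs. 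Your chain of $p_{D+1}$-indistinguishabilities therefore breaks.

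The paper closes this link differently and without any morphing: it observes that in both $\G_b$ and $\G_c$ the processes $p_1$ and $p_2$ have the \emph{same} in-neighbourhood $\{p_1,p_2\}$, so their states evolve identically and $\Exec{C^0,\G_b^D}\indist{p_1}\Exec{C^0,\G_c^D}$; appending $\Star{p_1}^\infty$ then forces equal decisions. In other words, the indistinguishability witness must switch from $p_{D+1}$ (used for the $\G_a/\G_b$ and $\G_c/\G_d$ links) to $p_1$ for the middle link. Once you make that substitution, your base case goes through and the rest of your plan aligns with the paper.
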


\iftoggle{TR}{
  \begin{proof}
  In the case where $D=1$, we need to show the impossibility of $\BStable(1)$.
  We immediately note that $\sigma \in \BStable(1)$ if and only if we have that each $\G \in \sigma$
  is rooted and a has graph diameter of $1$.
  Clearly, the graph sequence where (i) two fixed processes $p,q$ have 
non-self-loop in-edges at most from
  each other and at least one of those present in every $\Gr$, and (ii) all
  other processes have an in-edge from both $p$ and $q$ is in $\BStable(1)$.
  However, a trivial modification of \cite[Theorem 2]{SWK09} shows that 
consensus (in particular, among $p$ and $q$) is impossible in this setting.
  
  For the remainder of the proof, let us thus assume $D \geq 2$.
  Since $\BStable(x) \supset \BStable(D)$ for $x \leq D$, it suffices
  to show the impossibility of consensus under $\BStable(D)$:
  The execution $\varepsilon$ where consensus cannot be solved is
  admissible under $\BStable(x)$ if it is admissible under
  $\BStable(D)$.
  The proof proceeds roughly along the lines of \cite[Lemma 3]{SWK09}.
  It first shows that there is a bivalent round-$D$ configuration for any consensus
  algorithm and proceeds to show by induction that every bivalent configuration
  has a bivalent successor configuration.
  Hence, any consensus algorithm permits a
  perpetually bivalent execution under $\BStable(D)$,
  where consensus cannot be solved.
  
  We show that a bivalent execution is even contained in the adversary
  $\BStable'(D)\subseteq \BStable(D)$, which consists of those executions
  of $\BSafety \cap \DBMA$ where already the first $D$ rounds are
  \Rooted{R}.

  For the induction base, we show that not all round $D$
  configurations of $\Alg$ can be univalent:
  Assume that some algorithm $\Alg$ solves consensus under $\BStable'(D)$ and
  suppose that all round $D$ configurations of $\Alg$ were univalent.

  Let $C^0$ be some initial configuration of $\Alg$ with $x_{p_1} = 0$ and $x_{p_2} = 1$
  and recall the graphs $\G_a, \G_b, \G_c$ and $\G_d$ from \figureref{fig:D-imposs}.
  For $i \in \{a, b, c, d \}$ let $C^D_i = \Exec{C^0, (\G_i)_{1}^D}$
  denote the configuration which results from applying $\G_i$ $D$ times to $C^0$.
  Let $\Star{p}$ denote the star-like graph where there is an edge from the
  center vertex $p$ to every other vertex and from every vertex to itself but
  there are no other edges in the graph.
  Clearly, $C^D_a$ is $0$-valent since $\Exec{C^D_a, \SeqG{\Star{p_1}}} \in \BStable'(D)$
  and for $p_1$ this is indistinguishable from the situation where all processes $p$ have
  $x_p = 0$.
  A similar argument shows that $C^D_d$ is $1$-valent.
  
  Consider two cases:

  (1) $C^D_b$ is $1$-valent.
  But then, $C^D_a$ cannot be $0$-valent since
  $\Exec{C^D_a, \SeqG{\Star{p_{D+1}}}} \indist{p_{D+1}} \Exec{C^D_b, \SeqG{\Star{p_{D+1}}}}$.
  
  (2) $C^D_b$ is $0$-valent.
  Then, $C^D_c$ is also $0$-valent since
  $\Exec{C^D_b, \SeqG{\Star{p_{1}}}} \indist{p_{1}} \Exec{C^D_c, \SeqG{\Star{p_{1}}}}$.
  But then $C^D_d$ cannot be $1$-valent because 
  $\Exec{C^D_c, \SeqG{\Star{p_{D+1}}}} \indist{p_{D+1}} \Exec{C^D_d, \SeqG{\Star{p_{D+1}}}}$.

  Hence, not all round $D$ configurations are univalent.


  For the induction step, let us assume that there exists a bivalent round $r$ configuration
  $C^r$ at the end of round $r \geq D$.
  For a contradiction, assume that all round $r+1$ configurations reachable from $C^r$
  are univalent.
  Thus, there exists a $0$-valent round $r+1$ configuration $C_0^{r+1} = \Exec{C^r, \G_0}$
  that results from applying some communication graph $\G_0$ to $C^r$.
  Moreover, there is a $1$-valent round $r+1$ configuration $C_1^{r+1} = \Exec{C^r, \G_1}$
  that results from applying some communication graph $\G_1$ to $C^r$.
  
  First, let us show that for $\G \in \{ \G_0, \G_1 \}$, it holds that, if $\Root(\G) = \Root(\G^r)$,
  there is an applicable graph $\G'$ s.t.\ $\Exec{C^r, \G'}$ has the same valency as $\Exec{C^r, \G}$
  and $\Root(\G) \neq \Root(\G')$.
  The reason for this is that we can construct $\G'$ from $\G$ by simply adding an edge $(p \ra q)$
  for some $q \neq p$, $p \not\in \Root(\G)$, $q \in \Root(\G)$ if $|\Root(\G)| = 1$, respectively,
  by removing $(p \ra q)$ for some $q \in \Root(\G)$ and all $p \neq q$ if $|\Root(\G)| > 1$.
  This yields a graph $\G'$ with the desired property because
  $\Exec{C^r, \G, \SeqGr{\Star{p}}} \indist{p}
  \Exec{C^r, \G', \SeqGr{\Star{p}}}$.
  The applicability of $\G'$ follows because $\G'$ is rooted and
  $\Root(\G') \neq \Root(\Gr)$ ensures that the resulting subsequence is a prefix of some
  sequence of $\DBMA$ for any $D>1$, because, for these choices of $D$, a changing root
  component trivially satisfies \defref{def:dynamic-diameter}.

  Hence we can find graphs $\G'_0, \G'_1$ such that $\Root(\G'_0) \neq \Root(\Gr)$,
  $\Root(\G'_1) \neq \Root(\Gr)$, and $\Exec{C^r, \G'_0}$ is $0$-valent
  while $\Exec{C^r, \G'_1}$ is $1$-valent.
  As we assumed $D \geq 2$ it follows that $n > 2$.
  We can hence apply \lemmaref{lem:increment-root-grid} to go from $\G'_0$ to $\G'_1$
  by adding/removing a single edge at a time,
  without ever arriving at a graph that has more than one root component or has the same root component as
  $\Gr$.
  Somewhere during adding/removing a single edge, we transition from a graph $\G_i$ to 
  a graph $\G_{i+1}$, by modifying an edge $(p \ra q)$,
  where the valency of $C = \Exec{C^r, \G_i}$ differs from the valency of
  $C' = \Exec{C^r, \G_{i+1}}$.
  Nevertheless, $\G_i$ and $\G_{i+1}$, are applicable to $C^r$ because they are
  rooted and have a different root component as $\G^r$, hence guarantee the membership of
  the sequence in $\DBMA$ for any $D > 1$.
  However, $C$ and $C'$ cannot have a different valency because
  $\Exec{C, \SeqGr{\Star{p}}} \indist{p} \Exec{C', \SeqGr{\Star{p}}}$.
  This is a contradiction and hence not all round $r+1$ configurations
  can be univalent.
\end{proof}

}{}

\tikzset{                                                                       
   >=latex,shorten <=2pt,shorten >=2pt,                                          
   inner sep=1pt,                                                                
   dot/.style={->},                                                              
   dash/.style={<->,dashed},                                                     
   proc/.style={}
}          

\section{Solving Consensus with $D+1$ Rounds of Stability} \label{sec:algorithm}


We now present \algref{alg:consensus}, which, under the message adversary
$\BStable(D+1)$, solves consensus if $D$ and a bound $N \geq n$ is known a
priori. 
It relies on an underlying
graph approximation algorithm used already in 
\cite{SWSBR15:NETYS,SWS15:arxiv}, which provides each process with
a local estimate of past communication graphs. It
is easily implemented on top of any full-information protocol\footnote{Since 
we are mainly interested in the solvability aspect of consensus,
we consider a full-information protocol where processes forward
their entire state in every round.
As will become apparent from the correctness proof of \algref{alg:consensus}, however, 
it would in fact suffice to exchange the
relevant data structures for the last
$N(D+2N)$ rounds at most. Hence, the same engineering improvements as outlined
in \cite{SWS15:arxiv} can be used to get rid of the simplifying 
full-information-protocol assumption.} 
and
has been omitted for brevity.
As stated in Corollaries~\ref{cor:underapproximation} and~\ref{cor:interval-detection}
below, the
local graph estimates allow to faithfully
detect the existence of root components with a latency of at most
$D$ rounds,
with some restrictions.

In our consensus algorithm, processes operate in two alternating
states (``not locked'' and ``locked''). 
Basically, in the ``not locked'' state, process $p$ tries to find a root 
component and, if successful, locks on to it,
i.e., considers it a potential base for its decision.
This is realized by adapting the local $\Proposal_p$ value 
to the value of the root component (which is the maximum of the
proposal values of its members) and setting
$\Locked_p \gets \true$,
thereby entering the ``locked'' state.
Subsequently, $p$
waits for contradictory evidence for a period of time that
is long enough to guarantee that every process in the system has adapted to
the value of the root component locked-on by $p$.
If $p$ finds contradictory evidence, it backs off by leaving the ``locked''
state through setting $\Locked_p \gets \false$.

In more detail, $p$ changes its state from ``not locked'' to ``locked''
when it detects a root component
via \lineref{line:root-guard}
that was present $D$ rounds before the current round. 
In this case, $p$
enters the locked state by setting
$\Locked_p \gets \true$ and $\Proposal_p$ to $\Proposal'$, the maximum value
known to any process of the root component in round $r-D$, via
Lines~\ref{line:lock-root} and~\ref{line:locked-root}.
It furthermore records the current round as $\ell$, the so-called \emph{lock-round},
via \lineref{line:lockround}.
This is an optimistic mechanism, i.e., the process ``hopes'' that this
root component was already the stable root component promised by the adversary.
In this sense, the process starts collecting evidence, 
possibly contradicting its
hope, that stems from the lock-round itself or a more recent round.

If $p$ finds a root component when it is already in the
``locked'' state, i.e, when
it is already locked-on to some root component, it proceeds as follows:
First, $p$ checks whether it detected a relevant change in the members of the
root component since its lock-round and whether the maximum value, known to the
processes of the new root component at the time, is different
from its current proposal value.
If both checks succeed, there was enough instability for the process to conclude
that the currently locked-on root component cannot belong to the stable period; it
hence locks-on to the newly detected root component.
If the values remained the same,
it puts the current round in a \emph{queue} of candidate lock-rounds
in \lineref{line:queue-append}.
Later, if the process finds evidence that indeed contradicts its hope (as detailed
below), it will
pick the next lock round from the candidate queue via \lineref{line:queue-getLock}, and
tries to find contradictory evidence from this round on, thereby ensuring
that it cannot miss the promised stable root component.

When $p$ is still in a ``locked'' state after completing its root component detection,
it searches for contradictory evidence. More precisely, contradictory 
evidence means that $p$ learned of another process $q$
s.t.\ $q$ is not locked-on to a root component with the same value as $p$.
This is realized via the guard of \lineref{line:guard-backoff}.
If $p$ finds such evidence, it leaves its locked state and enters the ``not locked'' 
state again.
A decision occurs when $p$ has received no contradictory evidence for
such a long time that every other process has seen that
$p$ might be in this situation via \lineref{line:lock-hirsch}.
As we prove in \lemmaref{lem:agreement}
below,
it is guaranteed that if $p$ passes the guard in
\lineref{line:decide}, then every other process has $p$'s decision value as its 
local proposal value forever after.
Hence, all future decisions will be based on this value, which leads the system
to a safe configuration.

\begin{algorithm}[ht!]
  \small
  \caption{Consensus algorithm, code for process $p$}
  \label{alg:consensus}
  \DontPrintSemicolon
  \SetKwInput{Initialization}{Initialization}
  \SetKwInput{Transmit}{Transmit round $r$ messages}
  \SetKwInput{Compute}{Round $r$ computation}

  \nonl Let $\lastround_q$ be the last round $p$ heard from
  $q$ in round $r$, i.e.,
  $q \in \CP{p}{\lastround_q}{r}$ and $q \notin \CP{p}{\lastround_q +1}{r}$.

  \BlankLine

  \Initialization{}
  $\Proposal \gets x_p$ \tcc{initially, $p$ proposes own input}
  $\Locked \gets \true$ \tcc{$p$ starts `locked-on'}
  $\ell \gets 1$ \tcc{initialize lock-round to start round}
  $\Queue \gets \emptyset$ \tcc{we assume that $\max(\emptyset) = 0$}
  \BlankLine

  \Compute{}
  $R \gets \Root^r_p({r-D})$ \;
  \tcc{Find all roots that were detected since the (estimated) start of the stability phase}
  $T \gets \{ \Root^r_p(i)\mid  \max(\max(\Queue), \ell) -D \leq i \leq r-D \}$
  \label{line:T-construction}\;
  \tcc{For each process, determine its known states within the last $\nbound$ rounds:}
  $S \gets \{ q^s \mid q \in \CP{p}{r-\nbound}{r} \land s \in [r-N, \lastround_q] \}$ \;
  \tcc{Find the proposal values of the locked-on processes of $S$}
  $S' \gets \set{ \Proposal_q^s \mid q^s \in S \land \Locked_q^{s} = \true }$
  \label{line:Sprime-construction}\;
  \tcc{Collect all known states of the last $\nbound(D+2\nbound)$ rounds}
  $S'' \gets \set{ q^s \mid q \in \CP{p}{r-\nbound(D+2\nbound)}{r} \land s \in
	  [r - (D+2\nbound)^2, \lastround_q]}$
  \label{line:Sprimeprime-construction}\;
  \If{$R \neq \{ \bot \}$} { \label{line:root-guard}
  $\Proposal' \gets \max_{q \in R} \Proposal_q^{r-D}$ \label{line:calc-newProp} \;
  \If{$\Locked = \false$ or $(\Proposal' \neq \Proposal$ and 
    $\abs{T} > 1)$ \label{line:guard-newRoot}}{
      $\Proposal \gets \Proposal'$ \; \label{line:lock-root}
      $\Locked \gets \true$ \; \label{line:locked-root}
      $\ell \gets r$ \; \label{line:lockround}
    }
    \lElseIf{$\Proposal'=\Proposal$}{
      add $r$ to $\Queue$ \label{line:queue-append}
    }
  }
  \If{$r \geq \ell + N$ and $q^{s} \in S$ s.t.\ $\Locked_q^{s} = \false$
  or $\Proposal \neq \Proposal_q^{s}$ \label{line:guard-backoff}}{
    remove all $r'$ with $r' \leq s$ from $\Queue$ \label{line:queue-prune} \;
    \lIf{$\Queue \neq \emptyset$ \label{line:queue-getLock}}{
    $\ell \gets min(\Queue)$ 
	}
    \lElse{
      $\Locked \gets \false$ \label{line:locked-backoff}
    }
  }
  \lIf{$r \geq \ell + 2N$ and $S'$ contains single element $x \neq \Proposal$}
    {$\Proposal \gets x$} \label{line:lock-hirsch}
  \lIf {not yet decided, $r = \ell + \nbound(D + 2\nbound)$, and, for all $q^s$ of $S''$, $\Locked_q^s = \true$ and $\Proposal_q^s = \Proposal$}{decide $\Proposal$} \label{line:decide}

\end{algorithm}

\subsection*{Correctness proof\iftoggle{APPENDIX}{ of \algref{alg:consensus}}{}}
\newcommand{\mydatafn}[1]{\textsf{\small #1}}
\SetDataSty{mydatafn}

We now prove the correctness of \algref{alg:consensus} under $\BStable(D+1)$.
As shown in \cite[Lemmas 3 and 4]{SWS15:arxiv}, the
simple graph approximation algorithm running underneath our
consensus algorithm allows processes to faithfully
detect root components under certain circumstances.
Denoting process $p$'s round $r$ estimate of $\Root(\G^s)$ as
$\Root^r_p(s)$, the following two key features can be
guaranteed algorithmically (we assume that $\Root^r_p(s)$ returns
$\{ \bot \}$ if $p$ is unsure about its estimate):

\begin{corollary} \label{cor:underapproximation}
  If $\Root^r_p(s) \neq \{ \bot \}$, then $\Root(\G^s) = \Root^r_p(s)$.
  Furthermore, in round~$r$, process $p$ knows the round $s$ state $q^s$ 
for every $q \in \Root(\G^s)$.
\end{corollary}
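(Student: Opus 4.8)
The plan is to read this corollary as a transcription of the guarantees provided by the underlying graph-approximation algorithm, so that it reduces to the cited \cite[Lemmas~3 and~4]{SWS15:arxiv} together with \defref{def:root}. First I would recall that $\Root^r_p(s)$ is an \emph{under-approximation}: it returns a set different from $\{\bot\}$ only when, by the end of round $r$, process $p$ has accumulated enough causal information to \emph{certify} a root component of $\G^s$, and it returns $\{\bot\}$ whenever it cannot issue such a certificate. The whole argument therefore turns on spelling out exactly what information a valid certificate requires and why that information is both available to $p$ and sufficient to pin down $\Root(\G^s)$.

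For the \emph{furthermore} clause, I would argue that the certificate for a candidate set $R$ can only be issued once $p$ has learned the round-$s$ state $q^s$ of every $q \in R$, i.e.\ once $q \in \CP{p}{s}{r}$ for all $q \in R$; in a full-information protocol these two statements are equivalent, as noted right after \defref{def:causal-influence}. Thus the knowledge claim is immediate from the precondition under which the estimate is permitted to fire.

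For the correctness claim I would exploit that each state $q^s$ records precisely $\IN_q(\G^s)$, the round-$s$ in-neighbourhood of $q$. Hence, knowing $\{q^s : q \in R\}$, process $p$ knows every edge of $\G^s$ whose head lies in $R$. By \defref{def:root}, $R$ is the root component of $\G^s$ exactly when (i) the subgraph of $\G^s$ induced on $R$ is strongly connected and (ii) $(p' \ra q)\in\G^s$ with $q \in R$ forces $p' \in R$ --- and both conditions refer only to edges pointing into $R$, i.e.\ only to $\{\IN_q(\G^s): q \in R\}$. Since the approximation issues a non-$\{\bot\}$ answer only after verifying exactly (i) and (ii) against this known data, any non-$\{\bot\}$ output must coincide with the true $\Root(\G^s)$.

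The main obstacle I anticipate is the delicate point that $p$ generally has only \emph{partial} knowledge of $\G^s$: it may be entirely ignorant of the in-neighbourhoods of processes outside $R$. The crux of the proof is to justify that this partial view is nonetheless conclusive, by observing that \defref{def:root} constrains only the in-edges of the members of $R$, so no information about the remainder of the graph is needed to certify root-component membership; uniqueness of the target is guaranteed because under $\BSafety$ every $\G^s$ is rooted. Establishing this sufficiency rigorously --- rather than merely that $R$ \emph{looks} root-like from a partial view --- is exactly the content of \cite[Lemmas~3 and~4]{SWS15:arxiv}, which I would cite to close the gap.
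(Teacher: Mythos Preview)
The paper does not give a proof of this corollary at all: it is stated immediately after the sentence ``As shown in \cite[Lemmas 3 and 4]{SWS15:arxiv}, the simple graph approximation algorithm running underneath our consensus algorithm allows processes to faithfully detect root components under certain circumstances,'' and the only additional justification is the remark that the approximation is an \emph{under}-approximation. Your proposal takes exactly this route --- reduce to the cited lemmas --- and then supplies the mechanism those lemmas encapsulate, namely that knowing $\IN_q(\G^s)$ for every $q\in R$ is enough to verify both clauses of \defref{def:root}. That is sound and matches the paper's intent.

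One minor remark: your appeal to $\BSafety$ for ``uniqueness of the target'' is not really needed as part of the argument. The notation $\Root(\G^s)$ in the statement already presupposes that $\G^s$ is rooted (the paper defines $\Root(\G)$ only for rooted graphs), so the corollary is simply read under that standing hypothesis; you do not have to import the adversary's safety property to make the conclusion well-posed.
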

We note that \corollaryref{cor:underapproximation} relies critically on the fact that the graph
approximation algorithm maintains an under-approximation of past
communication graphs.

\begin{corollary} \label{cor:interval-detection}
  If $\Seqr{s}{s+D}$ is \Rooted{R}, then we have $\Root^{s+D}_p(s) = R$
for every process $p$.
\end{corollary}
Note that \corollaryref{cor:interval-detection} 
is a consequence of the message adversary respecting the
dynamic diameter $D$: In round $s+D$, for each process $q \in R$,
every process received a message containing $q^s$,
the round $s$ state of $q$.

We are now ready to formally analyze \algref{alg:consensus}.
For this purpose, we introduce the useful term of \emph{$v$-locked
root component} to denote a root component where all members are
\emph{$v$-locked} for the same $v$, i.e., have the proposal $v$ and are 
in the locked state.

\begin{definition}
  The round $r$ state $p^r$ of $p$ is a $v$-locked state
  if $\Locked_p^r = \true \land \Proposal_p^r = v$.
  $R = \Root(\Gr)$ is a $v$-locked root component if
  all $q^r$ with $q \in R$ are $v$-locked.
  \label{def:v-locked-root}
\end{definition}

The following technical lemma is key for the proof of the agreement
property of consensus. It assures that if a sequence of at least $D+2N$
communication graphs occurs in which all root components happen to be
$v$-locked by our algorithm, then all the processes' proposal values 
are $v$ in all subsequent rounds.

\begin{lemma}
  Let $\sigma = \Seqr{a}{c}$ be a sequence of communication graphs such that,
for some fixed value $v$,  every $\Gr \in \sigma$ has a $v$-locked root component and
  $|\sigma| \geq D+2N$.
  Then, for any process $p$ and all rounds $r \geq c$, we have
  $\Proposal_p^r = v$.
  \label{lem:2N-v-locked}
\end{lemma}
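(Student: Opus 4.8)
The plan is to prove the statement in two movements: a forward argument that by round $c$ every process has proposal $v$, and an invariance argument that no process can leave $v$ afterwards. The first ingredient I would set up is \emph{continuous visibility} of $v$. Applying \lemmaref{lem:information-propagation} to each window $\G^{t-D+1},\dots,\G^{t}$ with $t\in[a+D-1,c]$, taking $X$ to be the union of the root components occurring in that window (each $v$-locked by hypothesis and \defref{def:v-locked-root}), yields for every process $p$ a $v$-locked state $q^{r'}$ with $r'\in[t-D+1,t]$ in $p$'s causal past. Since $N\geq D+1$ (as $D\leq n-1\leq N-1$), this state lies in the $N$-round window used to build $S$, and being locked with proposal $v$ it contributes $v$ to the set $S'$ of \lineref{line:Sprime-construction}. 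Hence $v\in S'$ at every process throughout $[a+D-1,c]$.

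Next I would show that the stable phase can only ever \emph{produce} the value $v$. For $t\in[a+D,c]$ the root estimate read in the guard refers to round $t-D\in[a,c]$; by \corollaryref{cor:underapproximation} it is either undefined or equals the true, $v$-locked root, so the maximum computed in \lineref{line:calc-newProp} equals $v$. Consequently every execution of \lineref{line:lock-root} during $[a+D,c]$ sets the proposal to $v$, and the append branch of \lineref{line:queue-append} can fire only for a process already at $v$. From this I draw three consequences: (i) no process acquires a lock on a value $w\neq v$ after round $a+D-1$, so every stale $w$-lock has $\ell\leq a+D-1$; (ii) a $w$-locked process appends nothing to its queue during the stable phase, so its queue never holds a round $>a+D-1$; and (iii) since $v\in S'$, the singleton condition in \lineref{line:lock-hirsch} can never be met by a value $\neq v$, so no overwrite ever installs a non-$v$ proposal during $\sigma$.

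The decisive step is flushing the stale locks, and this is where I expect the main difficulty. Fix a $w$-locked process $p$ with $w\neq v$. Because $v\in S'$ while its proposal is $w$, the guard of \lineref{line:guard-backoff} is enabled once $t\geq\ell+N\leq c-N$, with a contradictory state of round $s\geq t-D+1\geq a+D-1$; the pruning of \lineref{line:queue-prune} then removes every queue entry (all $\leq a+D-1\leq s$ by (ii)), so the queue empties and \lineref{line:locked-backoff} unlocks $p$. The subtle point is exactly the queue branch \lineref{line:queue-getLock}, which could \emph{re-arm} a stale lock with a fresh lock-round instead of releasing it; I would rule this out by combining (ii) — the queue is never replenished with $w$-valued rounds during $\sigma$ — with the monotone growth of the contradictory-evidence round $s$, so that the queue is drained, not refilled, and $\min$ of the queue (whenever taken) stays $\leq a+D-1$. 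This forces every $w$-lock to be released by round $c-N$; since its lock-round is $\leq a+D-1$, the threshold $\ell+2N\leq c$ of \lineref{line:lock-hirsch} is reached while $S'=\{v\}$, so $p$'s proposal is overwritten to $v$ by round $c$. Pinning down which contradictory state the existential guard of \lineref{line:guard-backoff} selects, and verifying that the $\ell+N$ and $\ell+2N$ timings fit inside the $2N$ slack that remains after the $D$ propagation rounds, is the delicate bookkeeping at the heart of the proof.

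Finally, the invariance follows by induction on $r\geq c$, where I carry the auxiliary invariant that no locked state at any round $\geq c-N$ carries a value other than $v$ (established by the flushing step). The only lines that modify a proposal are \lineref{line:lock-root} and \lineref{line:lock-hirsch}. The former reads round $r-D\geq a$, whose root is $v$-locked by hypothesis when $r-D\leq c$ and has proposal $v$ by the induction hypothesis when $r-D>c$, so it yields $v$. The latter reads the window $[r-N,r]\subseteq[c-N,\infty)$, in which no locked state carries a value other than $v$, so $S'\subseteq\{v\}$ and any overwrite is again $v$. Hence $\Proposal_p^r=v$ for every process $p$ and all $r\geq c$, as claimed.
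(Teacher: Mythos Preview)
Your overall architecture—establish that $v$ is always visible in $S'$, use this to flush stale non-$v$ locks and rule out \lineref{line:lock-hirsch} overwrites, then induct—is the right shape and mirrors the paper's two-invariant approach ($\InvA$: locked implies $v$-locked, on $[b,c]$; then $\InvB$: proposal equals $v$, for $r\geq c$). But there is a genuine quantitative gap at the very first step.

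You invoke \lemmaref{lem:information-propagation} on the window $\G^{t-D+1},\dots,\G^{t}$, i.e., on $D$ graphs. That lemma, however, is stated for a set of $n$ rooted graphs; its proof establishes $|S^{r_i}|\geq i$, so with only $D\leq n-1<n$ graphs you can conclude that at most $D$ processes have heard from a $v$-locked root, not all of them. Nor can you fall back on the dynamic-diameter guarantee of \defref{def:dynamic-diameter}: that only applies to $D$ consecutive graphs with a \emph{common} root $R$, whereas here the roots may vary from round to round (they are merely all $v$-locked). Consequently your ``continuous visibility'' claim $v\in S'$ on $[a+D-1,c]$ is unjustified, and with it claims (i)–(iii) and the flushing argument that depend on it. In particular, the bound $s\geq t-D+1$ on the contradictory-evidence round, which you use to empty the queue in one shot, has no basis.

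The fix is exactly what the paper does: apply \lemmaref{lem:information-propagation} to a window of $N\geq n$ graphs, which guarantees every process a $v$-locked state in $S$ only from round $b:=a+D+N-1$ onward (not $a+D-1$), and only with $s\geq r-N$. This is why the paper first proves the weaker invariant $\InvA$ on $[b,c]$ and only then lifts it to $\InvB$ at round $c$; the extra $N$ rounds in the hypothesis $|\sigma|\geq D+2N$ are consumed precisely here. With that correction your outline would essentially recover the paper's proof, but the bookkeeping around the queue (which witness $s$ is selected in \lineref{line:guard-backoff}, and why the surviving $\ell'=\min(\Queue)$ after pruning must already have been a $v$-locked append) needs to be redone against the $N$-window rather than a $D$-window.
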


\begin{proof}
  Let $b = a+D+N-1$ and note that $c \geq a + D + 2N -1 = b+N$.
  We prove our lemma using the invariants $\InvA, \InvB$ that are
  defined as follows:
  $\InvA_p(r) \Leftrightarrow \Locked_p^r = \false \lor \Proposal_p^r = v$,
  $\InvB_p(r) \Leftrightarrow \Proposal_p^r = v$.
  We use 
  $\InvA(r)$, resp.\ $\InvB(r)$, to denote that $\InvA_p(r)$, resp.\ $\InvB_p(r)$,
  holds for every $p \in \Pi$.

  First, we show $\InvA(r)$ for $r \in [b,c]$, by proving that for any process $p$,
  $\InvA_p(r)$ holds. We distinguish two cases.

  (1) If $p$'s lockround $\ell$, just
  before executing \lineref{line:guard-backoff} in round $r$,
  satisfies $\ell > r-N$,
  we have that $\Proposal_p^{\ell} = v$, because
  of \corollaryref{cor:underapproximation} and because
  $\Root(\G^{\ell-D})$ is $v$-locked.
  If $p$ modified $\Proposal_p$ in some round $s \in [\ell, r]$,
  it must have done so via \lineref{line:lock-root} or
  \lineref{line:lock-hirsch}.
  The former again means that $\Proposal_p^s = v$, due to
  \corollaryref{cor:underapproximation} and since
  $\Root(\G^{s-D})$ is $v$-locked by assumption.
  The latter means that the guard of \lineref{line:lock-hirsch}
  was passed and hence $s \geq \ell + 2N > b+N$,
  which contradicts that $s \in [\ell, r] \subseteq [b-N+1, c]$.

  (2) If $\ell \leq r- \nbound$, i.e., $r \geq \ell + \nbound$,
  because we assume that $\Root(\G^i)$ is $v$-locked
  for each $\G^i$ of $(\G^i)_{i=a}^r$ and $r \in [b, c] \Rightarrow r \geq a+N$,
  it follows from \lemmaref{lem:information-propagation} that
  there is a process $q$ and a round
  $s \in [r-\nbound, r]$
  s.t.\
  $q \in \Root(\G^s)$, $q \in \CP{p}{s}{r}$,
  and $q^s$ is $v$-locked.
  Since $\Root(\G^s)$ is $v$-locked by assumption,
  $S$ contains some state $q^s$ with $\Proposal_q^s = v$ and
  $\Locked_q^s = \true$.
  Therefore, $v \in S'^r$.
  It follows that if $S'^r$ contains a single element $x$,
  we have $x=v$.
  Since \lineref{line:lock-hirsch} is the only place in the code,
  apart from \lineref{line:lock-root}, where $p$ assigns
  a value to $\Proposal_p^r$, if $\Proposal_p^{r-1} = v$, then
  $\Proposal_p^r = v$.
  On the other hand, if $\Proposal_p^{r-1} \neq v$, because $r \geq \ell + \nbound$,
  $p$ evaluates the guard of \lineref{line:guard-backoff} to true.
  If $\Queue = \emptyset$, $p$ 
  executes \lineref{line:locked-backoff} and sets
  $\Locked_p^r \gets \false$.
  If $\Queue \neq \emptyset$, let $\ell' =min(\Queue)$.
  Due to \lineref{line:queue-prune}, we have $\ell' > s$,
  and because $s\in[r-\nbound,r]$, it follows that $\ell'>r-\nbound>b-\nbound>a+D-1$.
  Hence $\ell'-D \in [a,c-D] \subseteq [a,c]$.
  This means that during round $\ell'$, directly after executing
  \lineref{line:calc-newProp}, $p$ already had
  $\Locked = \true \land \Proposal = \Proposal'$.
  But according to \corollaryref{cor:underapproximation},
  this implies that $\Proposal = v$,
  because $\Root(\G^{\ell'-D})$ is $v$-locked by assumption during the
  interval $[a,c]$.

  We are now ready to complete the proof by using
  induction to show that $\InvB(r)$ holds for each $r \geq c$.
  For the base case, we prove $\InvB_p(c)$ for an arbitrary process $p$:

  If the lockround $\ell$ of $p$ in round $c$, just before reaching line 15,
  satisfies $\ell > c- 2N$, we can use the same arguments as in (1) above:
It follows from the assumption that $\Root(\G^i)$
  for $i \in [c-2N-D+1, c] \subseteq [a, c]$
  is $v$-locked and \corollaryref{cor:underapproximation} that whenever $p$
changes its proposal value in some round $s$, we have
  $\Proposal_p^s = v$. Consequently, $\Proposal_p^c = v$ as well.

  If $\ell \leq c-2\nbound$, we are in a similar situation as in (2)
above as $v \in S_p'^c$.
  Since $\InvA_p(s)$ holds for $s \in [b, c]$, $\Locked_q^s = \true$
  implies that $\Proposal_q^s = v$.
  Hence, there can be no $x \neq v$ with $x \in S_p'^c$.
  Therefore, $p$ executes \lineref{line:lock-hirsch} in round $c$ and
  sets $\Proposal \gets v$, hence $\Proposal_p^c = v$.

  For the induction step, we show that, for any process $p$,
  $\bigwedge_{i=c}^{r} \InvB(i)
  \Rightarrow \InvB_p(r+1)$.
  We distinguish three cases:

  (1) $p$ assigns a value to $\Proposal$ in round $r+1$ in
  \lineref{line:lock-root}.
  Irrespectively of $r$,
  we have $\Proposal^{r+1}_p = v$:
  If $r < c+D$, this holds
  because the root of $\G^{r+1-D}$ is $v$-locked.
  If $r \geq c+D$, we get the result from the fact that
  $\InvB(r-D)$ holds.

  (2) $p$ assigns a value to $\Proposal$ in round $r+1$ in
  \lineref{line:lock-hirsch}.
  This means that $S'$ contains only a single value.
  Clearly, for arbitrary rounds $r'$, $\InvB(r') \Rightarrow \InvA(r')$.
  Therefore, if $S_p'^{r+1}$ contains $\Proposal^s_q$ for
  $s \geq r+1-N \geq b$,
  because $\Locked^s_q = \true$ and $\InvA(s)$ holds,
  we have $\Proposal^s_q = v$.

  (3) $p$ does not assign a new value to $\Proposal^{r+1}_p$. Then,
  $\Proposal^{r+1} = v$, by the hypothesis that $\InvB_p(r)$ holds.
\end{proof}

With these preparations, we can now prove agreement, validity and termination
of our consensus algorithm.

\begin{lemma}
  \algref{alg:consensus} ensures agreement under each sequence $\sigma \in \BStable(D+1)$.
  \label{lem:agreement}
\end{lemma}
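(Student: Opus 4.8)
The plan is to prove the invariant announced just before the algorithm, namely that whenever a process passes the guard of \lineref{line:decide} and decides $v$, every process has proposal value $v$ from that round on. Agreement is then immediate: a deciding process outputs its current proposal, so if I take the \emph{first} decision, say $p$ deciding $v$ at round $r_d = \ell + \nbound(D+2\nbound)$, then no process decided earlier, $p$ itself outputs $v$, and every later decision happens at a round $\geq r_d$ and therefore also outputs $v$.

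For the ``forever after'' part I would use \lemmaref{lem:2N-v-locked} as the engine. It suffices to exhibit a window $\Seqr{a}{c}$ of at least $D+2\nbound$ consecutive communication graphs, ending at some round $c \leq r_d$, in which every graph has a $v$-locked root component; \lemmaref{lem:2N-v-locked} then yields $\Proposal_q^r = v$ for every process $q$ and all $r \geq c$, in particular for all $r \geq r_d$. The natural place to look for this window is inside $p$'s decision interval $[\ell,r_d]$, whose length $\nbound(D+2\nbound)$ comfortably exceeds $D+2\nbound$.

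First I would exploit the decision guard itself. Passing \lineref{line:decide} means that every state $q^s$ in $S''$ is $v$-locked, where $q$ ranges over $\CP{p}{r_d-\nbound(D+2\nbound)}{r_d}=\CP{p}{\ell}{r_d}$ and $s$ ranges down to $r_d-(D+2\nbound)^2 \leq \ell$. Hence every state of every process that $p$ heard from during its decision interval is certified $v$-locked; combined with \corollaryref{cor:underapproximation}, every member-state $q^s$ of every root component that $p$ actually \emph{detected} in that interval is $v$-locked. By \corollaryref{cor:interval-detection}, any root that persists for $D$ consecutive rounds inside the interval is detected by round $s+D$ and is therefore $v$-locked, and \lemmaref{lem:information-propagation}, applied over successive blocks of $\nbound$ rounds, guarantees that at least one root member per block reaches $p$ and so is likewise certified $v$-locked.

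The hard part — and the place where the dynamic-diameter assumption $\DBMA$ is indispensable — will be closing the gap between the roots that $p$ can locally certify and the roots that actually occur: for a round $t$ whose root persists for fewer than $D$ rounds, $p$ need not observe that root at all, so the decision guard says nothing about it directly. Here I would argue that such a root cannot be non-$v$-locked without eventually producing contradictory evidence that $p$ would have caught through the back-off guard of \lineref{line:guard-backoff} before round $r_d$. The lock-round/queue bookkeeping does this work: the surviving lock-round $\ell$ is advanced via \lineref{line:queue-getLock} at most $\nbound$ times, each candidate being scrutinised for $D+2\nbound$ rounds (which is exactly why the decision is deferred to round $\ell+\nbound(D+2\nbound)$), so that $\ell$ marks the start of an interval in which no non-$v$-locked root can hide. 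Combining this accounting with the dynamic-diameter fact that any root lasting $\geq D$ rounds is seen by everyone, I would conclude that every communication graph in a length-$(D+2\nbound)$ sub-window of $[\ell,r_d]$ has a $v$-locked root component, which is precisely the hypothesis needed to invoke \lemmaref{lem:2N-v-locked} and finish.
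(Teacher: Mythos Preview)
Your high-level plan --- find a subsequence of $D+2N$ consecutive graphs with $v$-locked roots inside the decision interval and invoke \lemmaref{lem:2N-v-locked} --- matches the paper exactly. The gap is in how you establish that such a window exists. You separate roots that $p$ ``certifies'' (those persisting $\geq D$ rounds) from short-lived ones and then lean on the back-off guard of \lineref{line:guard-backoff} and the queue bookkeeping to exclude hidden non-$v$-locked roots. This does not work: the back-off guard scans only the $N$-round set $S$ and governs whether $p$ releases its \emph{own} lock; it says nothing about whether a short-lived non-$v$-locked root member's state ever reached $p$'s causal past at all. Your accounting claim (``$\ell$ is advanced at most $N$ times, each candidate scrutinised for $D+2N$ rounds'') is not what the algorithm does and, even if it were, would not by itself deliver the window. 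Finally, $\DBMA$ is \emph{not} used for agreement --- neither \lemmaref{lem:information-propagation} nor \lemmaref{lem:2N-v-locked} relies on it; $\DBMA$ (via \corollaryref{cor:interval-detection}) enters only in the termination proof.

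The paper closes the step with a direct contrapositive that avoids all internal algorithm mechanics. Suppose no $(D+2N)$-long window of $v$-locked roots exists in $\sigma' = (\G^i)_{i=r_d-N(D+2N)}^{r_d}$. Then every maximal $v$-locked stretch in $\sigma'$ has length $< D+2N$ and is immediately followed by a graph whose root contains some non-$v$-locked member; since $|\sigma'| = N(D+2N)+1$, this yields at least $N \geq n$ such ``bad'' graphs $\G^{r_1},\dots,\G^{r_n}$. Apply \lemmaref{lem:information-propagation} to $G=\{\G^{r_1},\dots,\G^{r_n}\}$ with $X$ the set of their non-$v$-locked root members: some such state $q^{r_i}$ lies in $\CP{p}{r_i}{r_n}$ and hence in $S''$, contradicting the guard of \lineref{line:decide}. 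No root detection by $p$, no back-off or queue reasoning, no $\DBMA$.
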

\begin{proof}
  We show that if a process $p$ decides $v$ in round $r$, all future decisions are
  $v$ as well.
  A decision $v$ by $p$ can only occur if $p$ executed \lineref{line:decide}, which
  means that $S_p''^r$ contains only states $q^s$ with
  $\Locked_q^s = \true$ and $\Proposal_q^s = v$.
  We show below that
  this implies that there was a sequence
  $\sigma'' \subseteq \sigma' = \left(\G^i \right)_{i=r-\nbound(D+2\nbound)}^{r} \subset \sigma$
  s.t.\ each $\G^i \in \sigma''$ has a
  $v$-locked root component and $|\sigma''| \geq D+2\nbound$.
  We can hence directly apply \lemmaref{lem:2N-v-locked}, which yields that
  $\Proposal_q^{\postr} = v$ for all $q \in \Pi, \postr \geq r$.
  This proves our claim because a decision at process $q$ in round $\postr$ can
  only be on $\Proposal_q^{\postr}$.

  We show the contrapositive of the above implication.
  Assume there is no sequence $\sigma'' \subseteq \sigma'$
  with a common $v$-locked root component and
  $|\sigma''| \geq D+2\nbound$.
  Thus, every subsequence $\sigma''' \subseteq \sigma'$ that has a
  common $v$-locked root component may have at most
  $|\sigma'''| < D+2\nbound$ and is followed immediately by a
  communication graph $\G^{r_i}$ where some process
  $q \in \Root(\G^{r_i})$
  has $\Locked^{r_i}_{q} = \false$ or $\Proposal^{r_i}_{q} \neq v$.
  Hence, we have a set $G$ of at least $n$ such communication graphs
  during $\sigma'$.
  More accurately, there is an ordered set of
  communication graphs $G = \{ \G^{r_1}, \ldots, \G^{r_n} \}$
  with $r_1 \geq r-\nbound(D+\nbound)$, $r_n \leq r$,
  and some set $X \subseteq \Pi$ of processes such that
  $X \cap \Root(\G^{r_i}) \neq \emptyset$ for every $\G^{r_i} \in G$.
  Thus $\exists q \in X \cap \Root(\G^{r_i})$ s.t.
  $\Locked^{r_i}_{q} = \false$ or $\Proposal^{r_i}_{q} \neq v$.
  By \lemmaref{lem:information-propagation}, there is hence some process
  $q \in R$ and a $\G^{r_i} \in G$ s.t.\
  $q \in \Root(\G^{r_i})$ and $q \in \CP{p}{r_1}{r_n}$
  where $r_1 \leq r_i \leq r_n$.
  Consequently, $q^{r_i} \in S_p''^r$, but then there is a state $q^{r_i}$ in $S_p''^r$
  with $\Locked_q^{r_i} = \false$ or $\Proposal_q^{r_i} \neq v$.
\end{proof}

\begin{lemma}
  \algref{alg:consensus} ensures validity.
  \label{lem:validity}
\end{lemma}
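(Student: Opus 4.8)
The plan is to prove the single invariant that \emph{every proposal value ever held by any process is the input value of some process}, after which validity is immediate, since a decision in \lineref{line:decide} is always a decision on the deciding process's current $\Proposal$ value. Writing $V = \set{x_q \mid q \in \Pi}$ for the (finite, totally ordered) set of input values, I would establish
\[
  \forall p \in \Pi,\ \forall r \geq 0 \colon \quad \Proposal_p^r \in V
\]
by strong induction on the round number $r$.

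For the base case, the initialization sets $\Proposal \gets x_p$, so $\Proposal_p^0 = x_p \in V$. For the induction step, I would inspect the only three ways $\Proposal_p$ can receive a value in round $r+1$. If $\Proposal_p$ is left untouched, then $\Proposal_p^{r+1} = \Proposal_p^r \in V$ by the hypothesis. If $\Proposal_p$ is set in \lineref{line:lock-root}, then $\Proposal_p^{r+1} = \Proposal'= \max_{q \in R} \Proposal_q^{(r+1)-D}$ from \lineref{line:calc-newProp}; since the guard of \lineref{line:root-guard} guarantees $R \neq \set{\bot}$, \corollaryref{cor:underapproximation} ensures $R$ is a genuine root component and that $p$ actually knows each state $\Proposal_q^{(r+1)-D}$, and as $D \geq 1$ these are proposals from a round $\leq r$, hence all in $V$ by the hypothesis; a maximum over a nonempty finite subset of $V$ is attained and therefore lies in $V$. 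Finally, if $\Proposal_p$ is set in \lineref{line:lock-hirsch}, then $\Proposal_p^{r+1} = x$ is the single element of $S'$, i.e., $x = \Proposal_q^s$ for some locked state $q^s \in S$; since every state collected into $S$ stems from a round $s \leq r$, the hypothesis again yields $x \in V$. This exhausts the cases and closes the induction.

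Validity then follows at once: whenever a process decides in \lineref{line:decide}, it decides its current $\Proposal$ value, which the invariant places in $V$, so the decided value is the input of some process. I expect the only delicate point to be the bookkeeping ensuring a \emph{well-founded} induction, namely that every value read when assigning $\Proposal_p^{r+1}$ is a proposal from a strictly earlier round so that the hypothesis applies: this holds for \lineref{line:lock-root} because $(r+1)-D \leq r$, and for \lineref{line:lock-hirsch} because the sets $S,S'$ are formed from the causal past and hence carry round indices $s \leq r$. The accompanying remark that $\max$ preserves membership in $V$ is the other small step worth stating explicitly; neither presents a genuine difficulty.
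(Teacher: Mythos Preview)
Your proof is correct and follows essentially the same approach as the paper's: both argue that $\Proposal$ is only ever assigned an input value or another process's earlier $\Proposal$, so by induction every proposal---and hence every decision---is some input value. The paper compresses this into a single sentence, whereas you spell out the strong induction and case analysis explicitly; the additional care about well-foundedness (earlier-round indices in \lineref{line:lock-root} and \lineref{line:lock-hirsch}) is a welcome elaboration but not a different idea.
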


\begin{proof}
  Since decisions are only on values of $\Proposal_p^r$, which are modified
  exclusively by either assigning $\Proposal_q^s$ from different processes or
  input values $x_i$, validity follows.
\end{proof}

\begin{lemma}
  \algref{alg:consensus} eventually terminates under any sequence $\sigma$ of
  $\BStable(D+1)$.  \label{lem:termination}
\end{lemma}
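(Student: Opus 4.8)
The plan is to show that the single stability window promised by $\BLiveness(D+1)$ bootstraps a self-sustaining configuration in which every process is permanently $v$-locked for one common value $v$, and that from this configuration the guard on \lineref{line:decide} is eventually satisfied. First I would fix the window: by \defref{def:liveness}, every $\sigma \in \BStable(D+1)$ has a round $s$ and a set $R$ with $\Seqr{s}{s+D}$ being \Rooted{R}; put $v = \max_{q \in R}\Proposal_q^{s}$. By \corollaryref{cor:interval-detection}, in round $s+D$ every process $p$ obtains $\Root^{s+D}_p(s) = R$, so it takes the branch on \lineref{line:root-guard}, and by \corollaryref{cor:underapproximation} it knows every $q^{s}$ with $q \in R$; hence all processes compute the \emph{same} value $\Proposal' = v$ on \lineref{line:calc-newProp}.

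The core step is to prove that a few rounds after $s+D$ the system reaches a \emph{safe} configuration, i.e., a round $t \geq s+D$ from which, in every round, all processes are $v$-locked. This rests on three mechanisms acting together. An unlocked process locks onto $v$ at once through the $\Locked = \false$ branch of \lineref{line:guard-newRoot}; a process locked on a stale value re-locks onto $v$ when $\abs{T} > 1$, and otherwise $R$ must have been the unique detected root throughout the relevant window, which forces its locked value to already equal $v$; and, by \lemmaref{lem:information-propagation}, the value $v$ held by the locked members of $R$ reaches every process, so any remaining straggler adopts $v$ via \lineref{line:lock-hirsch}, while the back-off rule on \lineref{line:guard-backoff} (with the candidate queue) keeps no process locked on a contradicted value. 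Crucially, since $\BSafety$ makes every $\Gr$ rooted, once all processes are $v$-locked, \emph{every} root component is $v$-locked and no process ever again finds contradictory evidence on \lineref{line:guard-backoff}; this turns ``all processes $v$-locked'' into an invariant and yields the round $t$.

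Once this invariant holds from round $t$, every block of at least $D + 2N$ consecutive rounds consists of graphs with a $v$-locked root component, so \lemmaref{lem:2N-v-locked} confirms that all proposal values remain $v$ forever, consistently with the invariant. It then remains to fire the decision guard: after the safe configuration has persisted for more than $(D+2N)^2$ rounds, a process $p$ reaching a round $r$ with $r = \ell + N(D+2N)$ and $r - (D+2N)^2 \geq t$ finds $S''_p$ composed solely of states from rounds $\geq t$, all of which are $v$-locked with value $v = \Proposal_p$; the guard on \lineref{line:decide} is then satisfied and $p$ decides. Since the argument applies to every process, all processes decide.

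I expect the convergence step --- pinning down the round $t$ after which all processes are \emph{simultaneously} $v$-locked --- to be the main obstacle, since it must exclude pathological oscillations of the lock/back-off dynamics: one has to show that no process can stay locked on a value $\neq v$, nor become and remain unlocked, even though the adversary guarantees root detection only once (at round $s+D$). A secondary difficulty is the lock-round bookkeeping, namely verifying that each process's lock-round $\ell$ is eventually positioned so that the timing condition on \lineref{line:decide} coincides with a round whose window $S''_p$ of length $(D+2N)^2$ lies entirely inside the $v$-locked phase.
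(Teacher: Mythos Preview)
Your overall plan matches the paper's, but the convergence step has a real gap, and the paper closes it with a device you do not use. You take an arbitrary $R$-rooted window $[s,s+D]$ and hope that all processes become $v$-locked by some later round $t\geq s+D$. The paper instead fixes the \emph{earliest} such window $[a,b]$ and proves that every process is $v$-locked already at round $b$. This matters because after the window the adversary may change the root component every round, so $\Root_p^r(r-D)$ may return $\{\bot\}$ forever after $b$; then neither \lineref{line:root-guard} nor \lineref{line:guard-newRoot} fires again, and your fallback to \lineref{line:lock-hirsch} does not save a process still locked on $w\neq v$, since as long as both $v$-locked and $w$-locked processes coexist, $S'$ is not a singleton. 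The ``earliest'' choice is precisely what dissolves your problematic $|T|=1$ case: if $p$ is locked with $\Proposal_p\neq v$ at round $b$, then because the window is earliest, $\Root(\G^{a-1})\neq R$, hence $\Root_p^b(a-1)\neq R$ sits in $T$ together with $R$, so $|T|>1$ and $p$ re-locks via \lineref{line:lock-root}. Your claim that $|T|=1$ ``forces its locked value to already equal $v$'' is unsupported without this observation.

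The second difficulty you flag is not merely bookkeeping; it needs a stronger invariant than ``all $v$-locked''. The paper maintains, for all $r\geq b$,
\[
\JnvA_p(r)\ \Leftrightarrow\ \Locked_p^r=\true\ \wedge\ \Proposal_p^r=v\ \wedge\ (\ell_p^r=b\ \vee\ b\in\Queue_p^r)\ \wedge\ \ell_p^r\leq b,
\]
so in particular $\ell_p^r\leq b$ forever after. The clause $(\ell_p^r=b\ \vee\ b\in\Queue_p^r)$ is what survives the back-off on \lineref{line:guard-backoff}: even if $p$ enters that branch, $b$ remains in $\Queue_p$ (it cannot be pruned by a state from a round $<b$), and \lineref{line:queue-getLock} keeps $\ell_p\leq b$. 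This bound is exactly what makes the equality test $r=\ell+N(D+2N)$ on \lineref{line:decide} fire by round $b+N(D+2N)$ with $S''$ drawn entirely from $v$-locked rounds. Your sketch does not control $\ell$, and without such a bound the equality on \lineref{line:decide} need not ever be satisfied at a round whose $S''$ window is clean.
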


\begin{proof}
  Since $\sigma \in \BStable(D+1)$, there is an earliest subsequence $\sigma' =
  \Seqr{a}{b} \subset \sigma$ 
  with $|\sigma'| = D+1$ that has a common root $R$.
  Let $v = \max_{q \in R} \Proposal_q^a$. 
  
  We prove that in any round $r \geq b$, all processes are $v$-locked
  and have $\ell^b \leq b$.
  This implies the lemma, because, by
  \lineref{line:Sprimeprime-construction},
  in round $b+N(D+2N)$,
  $S''^{b+N(D+2N)}$ contains only $v$-locked states.
  Hence, any process that has not decided yet in round $b+N(D+2N)$
  will execute \lineref{line:decide} and decide $v$.

  In order to prove this, we introduce the invariants
  $\JnvA_p(r) \Leftrightarrow \Locked^r_p = \true \land \Proposal^r_p = v
  \land (\ell^r_p = b \lor b \in \Queue_p^r) \land \ell^r_p \leq b$
  and $\JnvA(r) \Leftrightarrow \forall p \in \Pi \colon \JnvA_p(r)$,
  which state that, at the end of round $r$, $p$, resp.\ every process of $\Pi$, is $v$-locked and
  either $p$'s lock-round is $b$ or $b$ is queued at $p$.
  We use induction to show that for all $r \geq b$
  it holds that $\bigwedge_{i=b}^r \JnvA(r) \Rightarrow \JnvA(r+1)$.

  For the base case, we show that $\JnvA_p(b)$ for any $p \in \Pi$.
  According to \corollaryref{cor:interval-detection}, in round $b$, every process $p$ has
  $\Root_p^b(a) = R$ and the guard of \lineref{line:root-guard} is evaluated to true.
  We distinguish two cases:

  (1) $p$ is not $v$-locked at the beginning of its round $b$ computation, i.e.,
  $\Locked^{b-1}_p = \false$ or $\Proposal^{b-1}_p \neq v$.
  If $\Locked^{b-1}_p = \false$, clearly
  Lines~\ref{line:lock-root}, \ref{line:locked-root}, and~\ref{line:lockround}
  are executed.
  If $\Proposal^{b-1}_p \neq v$, because $\sigma'$ is the first
  subsequence with the desired properties by assumption,
  $\Root(\G^{a-1}) \neq R$.
  According to \corollaryref{cor:underapproximation},
  $R' = \Root_p^b(a-1) \neq R$, thus, by construction of $T$,
  we have $R', R \in T$ with $R' \neq R$ and hence $\abs{T} > 1$.
  Since we assumed that $\Proposal^{b-1}_p \neq v = \Proposal_p'$, also in this subcase,
  Lines~\ref{line:lock-root},~\ref{line:locked-root}, and~\ref{line:lockround}
  are executed.
  Before executing \lineref{line:guard-backoff}, we hence have $\ell_p = b$ and
  therefore none of the subsequent guards can be triggered.
  Consequently, we have that $p$ is still $v$-locked at the end of its round
  $b$ computation and $\ell^b_p = b$.

  (2) If $p$ is already $v$-locked at the end of round $b-1$,
  we have $\ell_p^{b-1} < b$ and
  $p$ executes \lineref{line:queue-append} in round $b$,
  thereby adding $b$ to $\Queue_p$.
  Since $\ell$ is not modified later, $\ell_p^b < b$.
  If $p$ does not enter the guard of \lineref{line:guard-backoff},
  $\JnvA_p(b)$ holds before $p$ executes \lineref{line:lock-hirsch}.
  Otherwise, $\Queue_p$ still contains $b$.
  Since now only states $q^s$ with $s<b$ are known to $p$,
  \lineref{line:queue-prune} cannot remove $b$ from $\Queue_p$.
  Hence, the guard of \lineref{line:queue-getLock} is true
  and $\JnvA_p(b)$ still holds before $p$ executes
  \lineref{line:lock-hirsch}.

  Since we assume that $p$ was already $v$-locked at the end of
  round $r$, this  continues to hold also in the case when 
  \lineref{line:lock-hirsch} is ever executed: Since 
  $v \in S'$, it must be the case that if $S'$ indeed
  contains a single value $x$, then $x = v$.
  Therefore,
  $\JnvA_p(b)$ holds at the end of round $b$ as asserted.

  For the induction step, we assume that $\bigwedge_{i=b}^r \JnvA(r)$
  is true for some $r \geq b$ and show that then $\JnvA_p(r+1)$ holds
  for an arbitrary process $p$.

  We first show that, if $p$ enters \lineref{line:root-guard}, it
  cannot pass the conditional of \lineref{line:guard-newRoot}.
  For this, let us distinguish two cases:

  (1) For $r \in [b,b+D]$,
  $\JnvA(r)$ implies that
  either $b=\ell_p^{r}$ or $b \in \Queue_p^{r}$ and
  $p$ is $v$-locked at the end of round $r$.
  Thus, in round $r+1$, \lineref{line:guard-newRoot} can only be
  passed if $\abs{T} > 1$. 
  Since obviously $\max(\Queue) < r$ and we assumed that $r \in [b, b+D]$,
  by the construction of $T$ in \lineref{line:T-construction},
  $T \subseteq \{ \Root_p^r(i) \mid a \leq i \leq b \}$.
  In fact,
  $T \subseteq \{ \Root(\G^i) \mid a \leq i \leq b \} \cup \{\bot\}$,
  as a consequence of \corollaryref{cor:interval-detection}.
  By the assumption that $\sigma'$ is \Rooted{R},
  after its construction in round $r$, $T= \{ R \}$
  and \lineref{line:guard-newRoot} cannot
  be entered.

  (2) For $r \geq b+D+1$, by the induction hypothesis $\JnvA(r+1-D)$ holds.
  Thus, $\Locked_p^{r} = \true$ and for any process $q$,
  $\Proposal_q^{r-D} = v$, which implies $\Proposal_p' = v$.
  Therefore, the guard of \lineref{line:guard-newRoot} cannot be passed.

  Note that, by the hypothesis $\JnvA_p(r)$,
  $\ell_p^{r+1} \leq b$,
  since \lineref{line:guard-newRoot} was not passed and
  $\ell$ is not modified after this point.
  
  Until now we have shown that $\JnvA_p(r+1)$ holds right before $p$ executes line
  \lineref{line:guard-backoff}. We again distinguish two cases:

  (1) If $\ell \geq b$, it follows from the induction hypothesis
  $\bigwedge_{i=b}^r \JnvA(i)$ that all states from round $\ell$ to round $r$
  are $v$-locked and hence
  $S$ cannot contain any state that is not $v$-locked.
  Consequently, $p$ cannot pass the guard of \lineref{line:guard-backoff}.

  (2) If, on the other hand, $\ell < b$, the hypothesis $\JnvA(r)$ implies that
  $b \in \Queue_p$. Therefore, if the guard of \lineref{line:guard-backoff} is true,
  \lineref{line:queue-getLock} is executed and \lineref{line:locked-backoff} cannot
  be entered.

  In both cases $\JnvA_p(r+1)$ still holds before $p$ reaches
  \lineref{line:lock-hirsch}.

  Observe that, according to the induction hypothesis $\JnvA(r)$, $p$
  was $v$-locked at the end of round $r$.
  Hence, $v \in S'$, and if $S'$ contains a single element $x$, it must be
  that $x=v$.
  Hence $\JnvA_p(r+1)$ also holds if $p$ executed \lineref{line:lock-hirsch}.
  Since \lineref{line:decide} cannot not invalidate $\JnvA_p(r+1)$ either, we
  are done.
\end{proof}

Lemmas~\ref{lem:agreement},~\ref{lem:validity}, and~\ref{lem:termination} yield
the correctness of \algref{alg:consensus}:

\begin{theorem}
  \algref{alg:consensus} correctly solves consensus under $\BStable(D+1)$,
provided $N \geq n$.
  \label{thm:correctness}
\end{theorem}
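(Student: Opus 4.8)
The plan is to observe that the consensus problem, as defined in \sectionref{sec:model}, decomposes into exactly the three properties of termination, agreement, and validity, each of which has already been isolated into its own lemma. Proving \theoremref{thm:correctness} therefore amounts to assembling \lemmaref{lem:agreement}, \lemmaref{lem:validity}, and \lemmaref{lem:termination} and checking that they all apply simultaneously under the \emph{same} adversary $\BStable(D+1)$ and the \emph{same} hypothesis $N \geq n$.

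First I would recall that an algorithm solves consensus under a message adversary $\MAD$ precisely when, for every initial configuration $C^0$ and every admissible run $\Exec{C^0, \sigma}$ with $\sigma \in \MAD$, it guarantees all three properties. Since each of the three lemmas is quantified over an arbitrary $\sigma \in \BStable(D+1)$ (indeed, \lemmaref{lem:validity} holds unconditionally), it suffices to invoke them in turn: \lemmaref{lem:termination} yields that every process eventually reaches \lineref{line:decide} and decides, \lemmaref{lem:agreement} yields that all such decisions coincide, and \lemmaref{lem:validity} yields that the decided value is the input of some process. The only point worth making explicit is that the standing assumptions are consistent across the three lemmas: all are stated for $\BStable(D+1)$, and the bound $N \geq n$ — needed so that the window $S''$ from \lineref{line:Sprimeprime-construction} and the propagation argument via \lemmaref{lem:information-propagation} span enough rounds and cover enough processes — is exactly the hypothesis carried through \lemmaref{lem:2N-v-locked}, \lemmaref{lem:agreement}, and \lemmaref{lem:termination}. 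Hence no further argument is required and the three lemmas compose directly into the theorem.

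The genuine difficulty of the result does not reside in this theorem but in the lemmas it cites, so I expect the ``main obstacle'' to have been overcome already. Specifically, the crux is \lemmaref{lem:2N-v-locked}, whose invariants $\InvA$ and $\InvB$ propagate a sufficiently long window of $v$-locked root components into a permanent, system-wide proposal value $v$ — this is what makes the safe-configuration argument of \lemmaref{lem:agreement} go through. The second delicate point is the invariant $\JnvA$ of \lemmaref{lem:termination}, which must simultaneously track the lock-round $\ell$, the candidate queue, and the $v$-locked status through the stability window to guarantee that an undecided process is forced to decide by round $b + N(D+2N)$. Given both of these, the proof of \theoremref{thm:correctness} is immediate.
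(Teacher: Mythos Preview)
Your proposal is correct and mirrors the paper's own argument: the theorem is obtained immediately by combining \lemmaref{lem:agreement}, \lemmaref{lem:validity}, and \lemmaref{lem:termination}. Your additional remarks about the consistency of the hypotheses and the location of the real difficulty are accurate but go beyond what the paper states, which simply cites the three lemmas.
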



\newcommand{\NONSPLIT}{\MASty{NON-SPLIT}}
\newcommand{\UNIFORM}{\lozenge \MASty{UNIFORM}}
\newcommand{\STAR}{\lozenge \MASty{STAR}}


\section{Solving Consensus for Long Periods of Eventual Stability}
\label{sec:completepicture}

While the ability to solve consensus under short-lived periods of stability 
is beneficial in terms of assumption coverage and fast termination, it
comes at the price of the rather involved \algref{alg:consensus}. In this
section, we show that a (considerably) longer period of stability facilitates 
a (considerably) simpler algorithm. More specifically, 
we show that, for $\EStable(x)$ with $x \geq 3n-3$, 
a simple consensus algorithm can be obtained by adopting
existing algorithmic techniques \cite{CBS09}.
Unfortunately, though, this approach does not work for $x < 3n-3$.

With $\Sigma$ representing all possible communication graph sequences,
let $\NONSPLIT$ be defined as those $\sigma \in \Sigma$ for which the
following holds:
If $\Gr \in \sigma$, then for all $p, q \in \Gr$, there is some $q' \in \Gr$ such
that $(q' \ra p) \in \Gr$ and $(q' \ra q) \in \Gr$.
Furthermore, let $\UNIFORM$ represent those sequences where there is a round
$r$ and a set $P$ of processes s.t.\ $\Gr = \bigcup_{p \in P} \Starwl{p}$, with
$\Starwl{p}$ denoting the star graph with central node $p$ and no self-loops in
the non-center nodes.
Finally, let $\STAR(y)$ be the set of those $\sigma \in \Sigma$ that contain a
subsequence $\sigma' \subseteq \sigma$, $|\sigma'| \geq y$, with a \VSRC{}
$R$ satisfying $\bigcup_{p \in R} \Starwl{p} \subseteq \Gr$ for each $\Gr \in \sigma'$.

Note that the definition of $\UNIFORM$ is such that if $(q \ra q) \in \Gr$
then $q \in P$, i.e., processes outside of $P$ cannot have any outgoing edges
(including self-loops).
This restriction does not hold for $\STAR(y)$, however, which is hence a
strictly stronger adversary for $y=1$, i.e., $\STAR(1) \supset \UNIFORM$.

Given a sequence $\sigma =  (\G^i)_{i>0}$, let the compound sequence
$\tilde{\sigma} = (\tilde{\G}^i)_{i > 0}$ be the sequence of graphs
$\tilde{\Gr} = \G^{(r-1)(n-1)+1} \circ \cdots \circ \G^{r(n-1)}$.
We recall from \cite[Lemma 4]{CFN15:ICALP} that if $\sigma \in \BSafety$ then
$\tilde{\sigma} \in \NONSPLIT$.

Since the compound graphs $\tilde{\Gr}$ combine $n$ consecutive graphs
$\Gr$ in fixed (ascending) order, this construction is not necessarily aligned
with the $R$-rooted subsequence provided by $\BStable(y)$. 
Let $\tilde{\sigma}$ be the compound sequence of some $\sigma \in \BStable(y)$.
In the worst case, $\tilde{\sigma}$ may contain a sequence of only $\lfloor
\frac{y-n+1}{n-1} \rfloor$ compound graphs $\tilde{\Gr}$ with the property
$\bigcup_{p \in R} \Starwl{p}\subseteq \tilde{\Gr}$, i.e., $\tilde{\sigma} \in
\STAR(\lfloor \frac{y-n+1}{n-1} \rfloor)$.

From \cite[Theorem 5]{CBS09}, we know that consensus can be solved under
$\NONSPLIT \cap \UNIFORM$.
On the other hand, our impossibility result \theoremref{thm:D-imposs} (for $D=n-1$),
implies that consensus is impossible under the adversary $\NONSPLIT \cap \STAR(1)$.
Note carefully that we can allow the $R$-rooted subsequence to be aligned
with the compound graph construction here, hence the impossibility
for $\STAR(1)$.

Consequently, for $i < 3$, using compound graphs does not provide
a solution for $\BSafety \cap \BLiveness(i(n-1))$, as
$\sigma \in \BSafety \cap \BLiveness(i(n-1))$ implies only $\tilde{\sigma}
\in \STAR(i-1) \cap \NONSPLIT$ according to the considerations above.
For $i \geq 3$, however, we have that $\tilde{\sigma} \in \STAR(2) \cap \NONSPLIT$,
which allows us to derive a simple consensus algorithm 
\algref{alg:voting} by combining the detection of root components
with the uniform voting algorithm from \cite[Algorithm 5]{CBS09}.

\begin{algorithm}[ht!]
  \small
  \caption{Compound graph algorithm, code for process $p$}
  \label{alg:voting}
  \DontPrintSemicolon
  \SetKwInput{Initialization}{Initialization}
  \SetKwInput{Transmit}{Transmit round $r$ messages}
  \SetKwInput{Compute}{Perform round $r$ computation until decision}

  \Initialization{}
  $\Proposal \gets x_p$ \;
  $m \gets \bot$ \;
  \BlankLine

  \Transmit{}
  Send $(m, \Proposal)$ to all $q$ with $(p \ra q) \in \Gr$ \;
  Receive $(m_q, \Proposal_q)$ from all $q$ with $(q \ra p) \in \Gr$ \;

  \BlankLine

  \Compute{}
  
  $M \gets \bigcup_{(q \ra p) \in \Gr} (q, m_q, \Proposal_q)$ \;
  $R^{r-1} \gets \Root^r_p({r-1})$ \;

  \uIf{$\bigcup_{(q, m_q, \Proposal_q) \in M} m_q  = \{  v \}$ with $v \neq \bot$}{
    $\Proposal \gets v$ \;
    $m \gets v$ \;
    decide $v$ \;
  }
  \uElseIf{$R^{r-1} \neq \emptyset$}{
    \uIf{$q \in R^{r-1}$ with $(q, m_q, \Proposal_q) \in M$ s.t.\ $m_q \neq \bot$}
    {
      $\Proposal \gets m_q$ \;
    }
    \uElse{
      $\Proposal \gets \max \{ \Proposal_q \mid (q, m_q, \Proposal_q) \in M, q \in R^{r-1} \}$ \;
    }
    $m \gets \Proposal$ \;
  }
  \uElseIf{$(q, m_q, \Proposal_q) \in M$ with $m_q \neq \bot$}{
    $\Proposal \gets m_q$ \;
    $m \gets \bot$ \;
  }
  \uElse{
    $m \gets \bot$ \;
  }
\end{algorithm}

\begin{theorem}
  \algref{alg:voting} solves Consensus under the adversary $\STAR(2) \cap \NONSPLIT$.
  \label{thm:voting}
\end{theorem}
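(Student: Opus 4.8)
The plan is to verify the three consensus properties for \algref{alg:voting} when it is run against $\STAR(2) \cap \NONSPLIT$, treating each graph of the sequence as one round. First I would record two structural facts. Since $\NONSPLIT$ forces every pair of processes to share a common in-neighbour, every graph is single-rooted: two disjoint root components $R_1,R_2$ would, by \defref{def:root}, force a common in-neighbour of a node of $R_1$ and a node of $R_2$ to lie in both, a contradiction. Second, the root-estimation primitive still obeys \corollaryref{cor:underapproximation} and the analogue of \corollaryref{cor:interval-detection} with $D=1$, because the star structure spreads every root member's state in a single hop, so a root that is stable for two consecutive rounds is detected by \emph{every} process in the second of them. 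Validity is then immediate: $\Proposal$ and $m$ are only ever set to an input value or to a value copied from another process, so any decided value is an input.

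For termination, let $[s,s+1]$ be the first two rounds of the window promised by $\STAR(2)$, with \VSRC{} $R$ and $\bigcup_{q \in R}\Starwl{q} \subseteq \Gr$ for $r \in \{s,s+1\}$. By the unit-latency detection above, every process $p$ obtains $\Root^{s+1}_p(s) = R$, and the star structure at round $s+1$ guarantees that $p$ receives the round-$s$ message of every $q \in R$. Hence each process that does not already decide via the uniform-vote rule enters the root-adoption rule and, working from the \emph{same} root $R$ and the same root-member states, computes the \emph{identical} value $w$ and sets $m \gets w$. In the next round every process therefore holds $m=w$; since each process hears at least itself through its self-loop, all received votes equal $w$, the uniform-vote rule fires, and every process decides. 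Thus all processes decide within a constant number of rounds of the window.

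Agreement is the substantive part, and I would follow the uniform-voting analysis of \cite{CBS09} with the unique root playing the coordinator's role. The backbone is the invariant that \emph{at most one vote value occurs in any round}: a non-$\bot$ vote is produced only by the decision rule, which merely re-broadcasts the (inductively unique) previous vote, or by the root-adoption rule, which extracts a value from $\Root(\G^{r-1})$; since that root is unique and, by \corollaryref{cor:underapproximation}, known identically to all processes that detect it, all root-adopters of a round agree. Granting this invariant, the lock lemma follows: if $p$ decides $v$ in round $r$ then all of $p$'s in-neighbours voted $v$, so by $\NONSPLIT$ every other process shares with $p$ an in-neighbour that voted $v$ and, by uniqueness, sees no competing vote; each such process therefore sets $\Proposal \gets v$. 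Once every $\Proposal$ equals $v$ it remains $v$ in all later rounds (root members, voters, and self-loops then carry only $v$), so every subsequent decision is $v$ and agreement holds.

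The hard part is the vote-uniqueness invariant, and precisely the one case the induction must exclude: a round in which the decision rule re-broadcasts an older value $V_{r-1}$ while the root-adoption rule casts a \emph{fresh} value $\max\{\Proposal_q : q \in \Root(\G^{r-1})\}$, taken because no root member had voted. Ruling this out is where $\NONSPLIT$, single-rootedness, and the detection corollaries must be combined: I would argue that if no member of $\Root(\G^{r-1})$ carried a vote then, by the detection corollary, no process detected a stable root ending at round $r-1$, which sharply constrains how $V_{r-1}$ could have been voted, and then use the common-in-neighbour property to force the two candidate vote values to coincide. This is exactly the delicate step that the corresponding lemma of \cite[Theorem 5]{CBS09} settles for the coordinator, and that must here be transferred to the root component.
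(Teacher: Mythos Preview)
Your proposal tracks the paper's own proof quite closely: both establish agreement by combining the consistency of root detection (\corollaryref{cor:underapproximation}) with the $\NONSPLIT$ common-in-neighbour property to show that once some process decides $v$ every process adopts $v$ as its proposal, and both obtain termination from the two-round $\STAR(2)$ window via the unit-latency detection guaranteed by \corollaryref{cor:interval-detection}. The validity argument is identical.

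The one substantive difference is how much machinery you build for agreement. You aim for a \emph{global} vote-uniqueness invariant, valid in every round, and this is what forces you into the ``hard case'' where a deciding process re-broadcasts $V_{r-1}$ while another process, detecting a root none of whose members voted, casts the fresh $\max$. The paper avoids this case entirely by restricting attention to the \emph{first} decision round $r$: in all rounds $<r$ no process has yet taken the decision branch, so every non-$\bot$ vote arose from root-adoption alone, and vote-uniqueness up to $r-1$ then follows immediately from the fact that all root-adopters of a given round see the same root with the same member states. With uniqueness at $r-1$ in hand, the non-split property locks all proposals to $v$ at round $r$; once all proposals equal $v$, the fresh-$\max$ branch can only produce $v$ in later rounds, so the clash you worry about never materialises. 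In short, your invariant is stronger than needed, and the delicate step you flag is an artefact of that extra strength---the paper's first-decision shortcut gives the same conclusion with less work.
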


\begin{proof}
If process $q$ is the first to decide $v$, at the end of round $r$,
then every process has $\Proposal^r = v$:
Since $q$ received only messages $(v, *)$ with $v \neq \bot$ and
$\Gr$ is non-split, any process $p$ received a message $(v, *)$.
Moreover no message with $(v',*)$ and $v' \neq v$ was sent in round $r$,
because every message with some $m\neq \bot$ was generated based on the root
of round $r-1$ which is, if detected by a process, detected consistently
on every process.
Hence during the round $r$ computation, every process sets $\Proposal$ to
$v$ either via line $13$ or line $18$.
As all future decisions are based on $v$, this ensures agreement.

%

Termination occurs at the latest when the two successive graphs $G, G'$ of $\STAR(2)$
occur in rounds $r'$, $r'+1$: In round
$r'+1$, every process detects $R$, the root component of $G$ and sets
$\Proposal = v$, where $v$ is uniquely determined by $R$.
Subsequently, any process sends and receives only messages $(*, v)$.
\end{proof}

\section{Conclusions}\label{sec:conclusion}

We provided tight upper and lower bounds for the solvability of consensus
under message adversaries that guarantee a stable root component only
eventually and only for a short period of time:
We showed that consensus is solvable if and only if each graph has exactly one
root component and, eventually, there is a period of at least $D+1$
consecutive rounds (with $D\leq n-1$ denoting the number of rounds required by
root members for broadcasting) where the root component remains the same.
We also provided a matching consensus algorithm, along with its correctness
proof.
While this kind of short-lived periods of stability is useful from the perspective of
assumption coverage in real systems and fast termination time, we also demonstrated
that longer periods of stability allow the development of less complex algorithms.

\clearpage
\bibliography{lit_bib/lit}
\bibliographystyle{abbrv}

\iftoggle{TR}{}{
\clearpage
\input{appendix}}

\end{document}